\newcommand{\abs}[1]{\left\vert#1\right\vert}
\DeclareMathOperator{\op}{op}
\crefname{hypothesis}{Hypothesis}{Hypotheses}
\theoremstyle{plain}
\newtheorem{lemma}{Lemma}
\newtheorem{proposition}{Proposition}
\newtheorem{theorem}{Theorem}
\theoremstyle{definition}
\newtheorem{definition}{Definition}
\theoremstyle{remark}
\newtheorem{remark}{Remark}
\title{Error Analysis of Sum-Product Algorithms under Stochastic Rounding\thanks{Version of \today.
\funding{This work was funded by the HOLIGRAIL (ANR-23-PEIA-0010), the INTERFLOP (ANR-20-CE46-0009), and FPT-4 (ANR-24-CE46-7572) projects.}}}
\author{
    Pablo de Oliveira Castro\footnotemark[1]\thanks{Université Paris-Saclay, UVSQ, Li-PaRAD, Saint-Quentin en Yvelines, France 
    (\email{pablo.oliveira@uvsq.fr}, \email{devan.sohier@uvsq.fr}).}
    \and
    El-Mehdi El arar\footnotemark[2]\thanks{Université de Rennes, Inria, IRISA, Rennes, France 
    (\email{el-mehdi.el-arar@inria.fr}).}
    \and
    Eric Petit\footnotemark[3]\thanks{Intel Corp , Portland, USA
    (\email{eric.petit@intel.com}).}
    \and
    Devan Sohier\footnotemark[1]
}
\newcommand{\mido}[1]{\textcolor{blue}{#1}}
\renewcommand{\textcolor}[2]{#2}
\begin{document}

\maketitle

% REQUIRED
\begin{abstract}
The quality of numerical computations can be measured through their forward error, for which finding good error bounds is challenging in general. For several algorithms and using stochastic rounding (SR), probabilistic analysis has been shown to be an effective alternative for obtaining tight error bounds. This analysis considers the distribution of errors and evaluates the algorithm's performance on average. Using martingales and the Azuma-Hoeffding inequality, it provides error bounds that are valid with a certain probability and in $\mathcal{O}(\sqrt{n}u)$ instead of deterministic worst-case bounds in $\mathcal{O}(nu)$, where $n$ is the number of operations and $u$ is the unit roundoff.
In this paper, we present a general method that automatically constructs a martingale for any computation scheme with multi-linear errors based on additions, subtractions, and multiplications. We apply this generalization to algorithms previously studied with SR, such as pairwise summation and the Horner algorithm, and prove equivalent results. We also analyze a previously unstudied algorithm, Karatsuba polynomial multiplication, which illustrates that the method can handle reused intermediate computations.
\end{abstract}

% REQUIRED
\begin{keywords}
Stochastic rounding, Martingales, Rounding error analysis, Floating-point arithmetic, Computation DAG, Karatsuba multiplication
\end{keywords}

% REQUIRED
\begin{MSCcodes}
65G50, 65F35, 60G44
\end{MSCcodes}

\section{Introduction}
\emph{Stochastic Rounding} (SR) is a rounding mode for floating-point numbers in which the rounding direction is chosen at random, inversely proportionally to the relative distance to the nearest representable values. SR is an alternative to the more common deterministic rounding that has drawn attention in recent years~\cite{muller2018handbook}, in particular due to its resilience to stagnation~\cite{xia2023influences, xia2023convergence}; the phenomenon in which the accumulator in long summations become so big that the remaining individual terms to be summed become negligible with respect to the precision in use, even if their exact sum is not. Indeed, for summations, \mido{round-to-nearest} (RN) has worst-case error bounds proportional to the number of floating-point operations $n$. With high probability, SR has error bounds~\cite{survey} proportional to $\sqrt{n}$.
SR is more robust than RN because the randomness removes the bias in the accumulation of errors.

For example, during parameter updates in deep learning, SR avoids stagnation, particularly when using low-precision formats for computations or storage~\cite{gupta}.  In \textit{gradient descent}, when computing the minimum of a function using RN-binary16 precision~\cite{xia2023influences, xia2023convergence}, it has been observed that the gradient can approach zero too quickly, causing the update to be lost due to limited precision. SR mitigates this issue by maintaining some accuracy on average, preventing stagnation in such scenarios.

Until now, computing SR probabilistic error bounds has been done case-by-case with algorithm-dependent proofs. Available proofs in the literature fall into two main schemes. A first proof scheme~\cite{theo21stocha,ilse, positive, hallman2022precision} models the algorithm's error as a stochastic process $\Psi_k$, shows that it is a martingale, and computes an error bound with the Azuma-Hoeffding concentration inequality. A second scheme~\cite{arar2022stochastic}, bounds the variance of $\Psi_k$ and applies \mido{the} Chebyshev concentration inequality.

This paper generalizes the computation of SR error bounds to all algorithms that can be modeled as a computation \mido{Directed Acyclic Graph} (DAG) comprised of sums, subtractions, and multiplications, as long as no multiplication node has two children sharing a common ancestor that is not an input (this is, in particular, true of all computation trees).

Section~\ref{sec:sum-product-trees} proves through structural induction that the errors in such computation DAGs form a martingale. It also gives a systematic recursive formulation to bound the martingale increments, allowing the use of Azuma-Hoeffding inequality to compute a probabilistic error bound of the whole computation DAG.

In Section~\ref{sec:examples}, we apply the method to generalize previous results on pairwise summation~\cite{hallman2022precision} and Horner's polynomial evaluation~\cite{arar2022stochastic}. Moreover, to the best of our knowledge, we are the first to investigate Karatsuba polynomial multiplication under SR. We demonstrate the applicability of the generalization proved in Section~\ref{sec:sum-product-trees} to bound the forward error of this algorithm under SR. 

With SR, the algorithmic errors are captured through a stochastic process. We
propose to use an important result from martingale theory, the Doob-Meyer
decomposition~\cite{doob1953stochastic}, to decompose the error stochastic process into a martingale and a predictable drift. We show
that the computation trees analyzed in previous sections always have a zero drift in
such decomposition. The paper closes with a discussion of possible directions
to analyze algorithms with a non-zero drift term.

\section{Preliminaries}
\label{sec:SR}

Throughout this paper,  $\widehat x = x(1+\delta)$ is the approximation of the real number $x$ under stochastic rounding, with $\abs{\delta} \leq u$ and $u$ is the unit roundoff. 
\mido{We denote $\gamma_n(u) = (1+u)^n - 1$.}
If $x$ is representable, $\widehat x = x$ and $\delta = 0$. 
For a non-representable $x\in \mathbb{R}$, denote $p(x)= \frac{x - \llfloor x \rrfloor}{\llceil x \rrceil - \llfloor x \rrfloor }$, where $\llceil x \rrceil$ is the smallest floating-point number upper than $x$, and $\llfloor x \rrfloor$ is the greatest floating-point number lower than $x$. Note that if $x$ is representable, $x= \llceil x \rrceil = \llfloor x \rrfloor$.
We consider the following stochastic rounding mode, called SR-nearness:

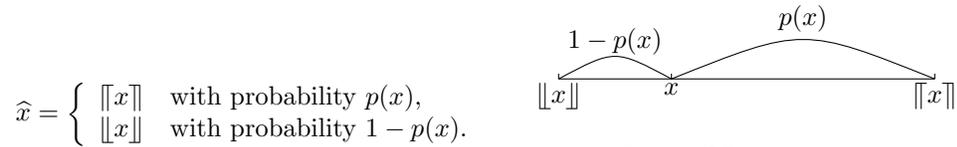
\begin{figure}[h]
    \centering
    \begin{minipage}{0.51\textwidth}
        \begin{align*}
            \widehat x &= \left\{
            \begin{array}{cl}
                \llceil x \rrceil  & \text{with probability $p(x)$,} \\
                \llfloor x \rrfloor & \text{with probability $1-p(x)$.} 
            \end{array} 
            \right. 
        \end{align*}
    \end{minipage}
    \hfill
    \begin{minipage}{0.475\textwidth}
        \centering
        \begin{tikzpicture}[xscale=5]
            \draw (0,0) -- (1,0);
            \draw[shift={(0,0)},color=black] (0pt,0pt) -- (0pt, 2pt) node[below] {$\llfloor x \rrfloor$};
            \draw[shift={(1,0)},color=black] (0pt,0pt) -- (0pt, 2pt) node[below] {$\llceil x \rrceil$};
            \draw[shift={(.3,0)},color=black] (0pt,0pt) -- (0pt, 2pt) node[below] {$x$};
            \draw (0,0) .. controls (.15,.4) .. (.3,0)  (0.15,0.5) node {$1-p(x)$};
            \draw (.3,0) .. controls (.65,.7) .. (1,0) (0.65,0.8) node {$p(x)$} ;
        \end{tikzpicture}
        \caption{\textbf{SR-nearness}.}
    \end{minipage}
\end{figure}
The rounding SR-nearness mode is unbiased (which does not mean that a sequence of operations using SR is necessarily unbiased; for instance, squaring an unbiased error leads to a bias due to the square term that corresponds to a variance):
\begin{align*}
\mathbb{E}(\widehat x) & = p(x)\llceil x \rrceil +(1-p(x))\llfloor x \rrfloor \\
& = p(x)(\llceil x \rrceil - \llfloor x \rrfloor) + \llfloor x \rrfloor =x.
\end{align*}

The following lemma has been proven in~\cite[lem 5.2]{theo21stocha} and shows that rounding errors under SR-nearness are mean independent.

\begin{lemma}\label{lem:meanindp}
    Let $a$ and $b$ be the result of $k-1$ scalar operations and $\delta_1, \ldots, \delta_{k-1}$ be the rounding errors obtained using SR-nearness. Consider $c \leftarrow a \op b$ for $\op \in \{+, -, \times, /\}$, and $\delta_k$ the error of the $k^{\text{th}}$ operation, that is to say, $\hat c = (a \op b) (1+\delta_k)$. The $\delta_k$ are random variables with mean zero and $(\delta_1, \delta_2, \ldots)$ is mean independent, i.e., $ \forall k \geq 2, \mathbb{E}[\delta_k / \delta_1,\ldots , \delta_{k-1}] = \mathbb{E}(\delta_k)$.
\end{lemma}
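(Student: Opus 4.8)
The plan is to condition on the first $k-1$ rounding errors and reduce everything to the unbiasedness identity displayed just above the lemma. Write $\mathcal{F}_{k-1} = \sigma(\delta_1, \ldots, \delta_{k-1})$. The first observation is that, since $a$ and $b$ are the outputs of the first $k-1$ scalar operations, they are deterministic functions of the inputs and of $\delta_1, \ldots, \delta_{k-1}$; hence $a$, $b$, and the exact intermediate result $y := a \op b$ are all $\mathcal{F}_{k-1}$-measurable. In other words, conditioning on $\delta_1, \ldots, \delta_{k-1}$ fixes the real number $y$ that is fed to the $k^{\text{th}}$ rounding.

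The second ingredient is the defining property of the SR model: the $k^{\text{th}}$ rounding is performed with fresh random bits, independent of those that produced $\delta_1, \ldots, \delta_{k-1}$. Formally, the conditional law of $\widehat y$ given $\mathcal{F}_{k-1}$ is the SR-nearness distribution centered at the (now constant) point $y$, namely $\widehat y = \llceil y \rrceil$ with probability $p(y)$ and $\widehat y = \llfloor y \rrfloor$ otherwise. Plugging this into the unbiasedness computation carried out above the lemma gives $\mathbb{E}[\widehat y \mid \mathcal{F}_{k-1}] = y$. The conclusion is then a one-line computation with a harmless case split: if $y = 0$, or more generally if $y$ is representable, then $\delta_k = 0$ by definition and there is nothing to prove; otherwise $\delta_k = \widehat y / y - 1$ with $y$ constant under the conditioning, whence
\[
\mathbb{E}[\delta_k \mid \mathcal{F}_{k-1}] = \frac{\mathbb{E}[\widehat y \mid \mathcal{F}_{k-1}]}{y} - 1 = 0.
\]
Taking total expectations and using the tower property gives $\mathbb{E}(\delta_k) = 0$ as well, so $\mathbb{E}[\delta_k \mid \delta_1, \ldots, \delta_{k-1}] = \mathbb{E}(\delta_k)$, which is precisely the claimed mean independence.

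I expect the only real obstacle to be the measure-theoretic bookkeeping behind the two informal notions used above: that ``$a$ and $b$ are the result of $k-1$ operations'' should be read as $\mathcal{F}_{k-1}$-measurability, and that ``fresh random bits'' should be read as the conditional distribution of $\widehat y$ given $\mathcal{F}_{k-1}$ being exactly SR-nearness about $y$. The clean way to discharge both is to build the probability space so that operation $j$ carries its own independent uniform variable $U_j$ and $\delta_j$ is a function of $U_1, \ldots, U_j$; then $\mathcal{F}_{k-1} \subseteq \sigma(U_1, \ldots, U_{k-1})$, the independence of $U_k$ from $(U_1, \ldots, U_{k-1})$ together with the tower property yields the conditional law claimed above, and all that remains is the unbiasedness identity together with the division-by-$y$ case split.
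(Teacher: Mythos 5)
Your proof is correct and matches the intended argument: the paper does not prove this lemma itself but defers to \cite[Lem.~5.2]{theo21stocha}, whose proof is precisely your observation that conditioning on $\delta_1,\ldots,\delta_{k-1}$ fixes $y = a \op b$, after which the unbiasedness of a single SR-nearness rounding gives $\mathbb{E}[\delta_k \mid \delta_1,\ldots,\delta_{k-1}] = 0 = \mathbb{E}(\delta_k)$ by the tower property. Your additional care with the representable/zero case and with formalizing the ``fresh random bits'' assumption via independent uniforms $U_j$ is sound and only makes the standard argument more precise.
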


\mido{When a random variable $X$ is fully determined by a random variable $Y$, the conditional expectation~\cite[sect.~34]{azum} satisfies the following property.}
\begin{proposition}
\mido{Let $X$ and $Y$ be random variables defined on the same probability space. If $X$ is entirely determined by $Y$, meaning there exists a measurable function $g$ such that $X = g(Y)$, then the conditional expectation of $X$ given $Y$ satisfies $\mathbb{E}(X \mid Y) = X$.
}
\end{proposition}

\begin{definition}[\mido{{\cite[p.~295]{azum}}}]
	\label{def:martingale}
	A sequence of random variables $M_1,\cdots, M_n$ is a martingale with respect to the sequence $X_1,\cdots, X_n$ if, for all $k,$
	\begin{itemize}
		\item $M_k$ is a function of $X_1,\cdots, X_k$,
		\item $\mathbb{E}(\lvert M_k \rvert ) < \infty,$ and
		\item $\mathbb{E}[M_k /  X_1,\cdots, X_{k-1}]=M_{k-1}$.
	\end{itemize}

\end{definition}

\begin{lemma}[Azuma--Hoeffding inequality, \mido{{\cite[p.~303]{azum}}}]
\label{lem:azuma}
	Let $M_0,\cdots, M_n$ be a martingale with respect to a sequence $X_1,\cdots, X_n.$ We assume that there exist $a_k<b_k$ such that $a_k \leq M_k - M_{k-1} \leq  b_k$ for $k = 1: n.$ Then, for any $A  > 0$ 
	$$ \mathbb{P}(\lvert M_n - M_0 \rvert \geq A) \leq 2 \exp \left( 
	-\frac{2A^2}{\sum_{k=1}^n(b_k-a_k)^2} 
	\right).
	$$
	In the particular case $a_k=-b_k$ and $\lambda = 2 \exp \left( 
	-\frac{A^2}{2\sum_{k=1}^n b_k^2} \right) $ we have 
	$$ \mathbb{P}\left( \lvert M_n - M_0 \rvert \leq \sqrt{\sum_{k=1}^n b_k^2} \sqrt{2 \ln (2 / \lambda)} \right) \geq 1- \lambda,
	$$
	where $0< \lambda <1$.
\end{lemma}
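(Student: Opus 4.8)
The plan is to prove Lemma~\ref{lem:azuma} by the exponential-moment (Chernoff) method, the classical route to concentration for martingales with bounded increments. Write $D_k = M_k - M_{k-1}$ for $k = 1,\dots,n$. By the martingale property $\mathbb{E}[D_k \mid X_1,\dots,X_{k-1}] = 0$, and by hypothesis $a_k \le D_k \le b_k$ with $a_k,b_k$ deterministic. For any $t > 0$, Markov's inequality applied to the nonnegative variable $e^{t(M_n - M_0)}$ gives
\[
  \mathbb{P}(M_n - M_0 \ge A) \;\le\; e^{-tA}\,\mathbb{E}\!\left[e^{t\sum_{k=1}^n D_k}\right].
\]

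The first main step is to bound the exponential moment by peeling off one increment at a time. Since $M_{n-1}$, hence $e^{t\sum_{k=1}^{n-1} D_k}$, is a function of $X_1,\dots,X_{n-1}$ (the first bullet of Definition~\ref{def:martingale}), conditioning yields
\[
  \mathbb{E}\!\left[e^{t\sum_{k=1}^{n} D_k}\right]
  = \mathbb{E}\!\left[e^{t\sum_{k=1}^{n-1} D_k}\,\mathbb{E}\!\left[e^{tD_n}\mid X_1,\dots,X_{n-1}\right]\right].
\]
The inner conditional expectation is controlled by Hoeffding's lemma: if $Y$ satisfies $\mathbb{E}[Y] = 0$ and $\alpha \le Y \le \beta$ almost surely, then $\mathbb{E}[e^{tY}] \le e^{t^2(\beta-\alpha)^2/8}$; applied conditionally (with $\alpha = a_n$, $\beta = b_n$) it gives the deterministic bound $\mathbb{E}[e^{tD_n}\mid X_1,\dots,X_{n-1}] \le e^{t^2(b_n-a_n)^2/8}$. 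Iterating down to $D_1$ produces
\[
  \mathbb{E}\!\left[e^{t(M_n - M_0)}\right] \;\le\; \exp\!\Bigl(\tfrac{t^2}{8}\sum_{k=1}^n (b_k - a_k)^2\Bigr),
\]
so $\mathbb{P}(M_n - M_0 \ge A) \le \exp\bigl(-tA + \tfrac{t^2}{8}\sum_{k=1}^n (b_k-a_k)^2\bigr)$. Minimizing the exponent over $t>0$ with the choice $t = 4A/\sum_{k=1}^n (b_k-a_k)^2$ gives $\mathbb{P}(M_n - M_0 \ge A) \le \exp\bigl(-2A^2/\sum_{k=1}^n (b_k - a_k)^2\bigr)$. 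Applying the identical argument to the martingale $(-M_k)$, whose increments $-D_k$ lie in $[-b_k,-a_k]$ (same width $b_k-a_k$), bounds $\mathbb{P}(M_n - M_0 \le -A)$ by the same quantity, and a union bound over the two one-sided events yields the first displayed inequality of the lemma. The particular case is then elementary: when $a_k = -b_k$ we have $(b_k - a_k)^2 = 4b_k^2$, so the bound becomes $\mathbb{P}(|M_n - M_0| \ge A) \le 2\exp\bigl(-A^2/(2\sum_{k=1}^n b_k^2)\bigr)$; setting the right-hand side equal to $\lambda$ and solving for $A$ gives $A = \sqrt{\sum_{k=1}^n b_k^2}\,\sqrt{2\ln(2/\lambda)}$, whence $\mathbb{P}(|M_n - M_0| \ge A) \le \lambda$ and therefore $\mathbb{P}(|M_n - M_0| \le A) \ge 1 - \lambda$.

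The step I expect to require the most care is Hoeffding's lemma, which is the only genuinely non-routine ingredient. I would prove it by convexity: for $y \in [\alpha,\beta]$, $e^{ty} \le \frac{\beta - y}{\beta - \alpha}e^{t\alpha} + \frac{y - \alpha}{\beta - \alpha}e^{t\beta}$; taking expectations and using $\mathbb{E}[Y] = 0$ reduces the right-hand side to $e^{\psi(h)}$ with $h = t(\beta - \alpha)$, $p = -\alpha/(\beta-\alpha) \in [0,1]$, and $\psi(h) = -ph + \ln(1 - p + pe^h)$. Checking $\psi(0) = \psi'(0) = 0$ and $\psi''(h) \le 1/4$ for all $h$, Taylor's theorem gives $\psi(h) \le h^2/8$, i.e. the claimed inequality. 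The only other thing to watch is the measurability bookkeeping in the peeling step, which is exactly why the ``$M_k$ is a function of $X_1,\dots,X_k$'' clause of Definition~\ref{def:martingale} is invoked; everything else is standard manipulation, and the final special-case identities are pure algebra.
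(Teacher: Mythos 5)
Your proof is correct and is the standard Chernoff-bound argument (conditional Hoeffding lemma, peeling the increments via the tower property, optimizing over $t$, then a union bound for the two-sided statement); the algebra in the optimization and in the special case $a_k=-b_k$ checks out. The paper does not reprove this lemma but cites it from the literature, and the cited classical proof is exactly the route you take, so there is nothing further to compare.
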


\begin{remark}
In Lemma~\ref{lem:azuma}, if all $b_k$ are constant with value $b$, 
$$	
\sqrt{\sum_{k=1}^n b_k^2} = \sqrt{\sum_{k=1}^n b^2} = \abs{ b} \sqrt{n}.
$$
\end{remark}

It has been shown~\cite{positive, arar2022stochastic, el2023bounds, ilse, hallman2022precision, theo21stocha} that the mean independence property is sufficient to improve the error analysis of algorithms with SR-nearness. It leads to obtaining a martingale (Definition~\ref{def:martingale}), which is a sequence of random variables such that the expected value of the next value in the sequence, given all the past values, is equal to the current value. Using the Azuma-Hoeffding inequality~\cite[p.~303]{azum}, allows to obtain probabilistic bounds on the error in $\mathcal{O}(\sqrt{n}u)$. For further details, we refer to~\cite[chap 4]{elararphd}.

\section{Errors in sum-product computation graphs} 
\label{sec:sum-product-trees}
In this section, by induction, and for any computation, we build a martingale, the last term of which is the rounding error of the computation. This construction gives the length of this martingale, as well as a condition number based on a deterministic bound on the martingale steps. Together, these quantities allow to apply the Azuma-Hoeffding inequality, or to compute the variance of the error, and thus to probabilistically bound the rounding error of the computation. 

This martingale generalizes the ones found for the recursive summation~\cite{theo21stocha}, the dot product~\cite{ilse}, the pairwise summation~\cite{hallman2022precision, el2023bounds}, and the Horner’s polynomial evaluation~\cite{positive}. This construction applies to any numerical scheme based on additions and products, in which no two variables sharing a rounding error are multiplied. Seeing the computation as a DAG, the parents of a multiplication node cannot share a common ancestor (except for inputs, which are not affected by an error). % and result from the martingale-inducing computation. 
In the case a common ancestor exists for multiplication nodes, then a bias appears (the expectation of the squared error is not zero), which this method cannot account for. We propose in the last section a method to deal with such biases. 

%\begin{mehdi}
 %   The inputs are not affected by an error. It's an assumption that we suppose, so maybe we have to reformulate this (except for inputs, which are not affected by an error).\\
  %  This remark was by Eric
%\end{mehdi}

The construction differs according to its last operation. For a sum, the martingale is basically the weighted sum of the two martingales associated to the summands, with one additional term for the last error: the length of the resulting martingale is the length of the longest of the two plus one, and the bound on the step is the weighted mean of the two bounds with the values of the summands as coefficients. 

For a product, the martingale is built by ordering the two martingales of the multiplied terms, and adding one term accounting for the last error. This ordering requires that no individual rounding error is shared in both terms, which forbids that any part of the two terms depend on the same computation, as previously stated.
%: in the computation DAG, no two paths can join a single non-leaf node to both parents of a multiplication node. 
The length of the resulting martingale is the sum of the lengths of the two plus one, and the bound on the step is the product of the two bounds associated to the operands. 

%Let us start with the base case, in which for an input (a void computation), the error, which is 0, can be seen as the last term of the trivial martingale consisting in the empty sequence. The length of this martingale is 0, and the associated condition number is 1.

In Figure~\ref{fig:last_op}, we consider an algorithm in which $\mathtt z$ is the return value, and the last operation is $\mathtt z\leftarrow\mathtt x \op \mathtt y$:
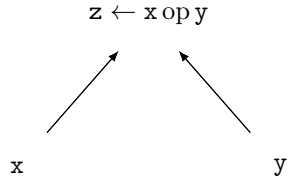
\begin{figure}[h]
\centering
\begin{tikzpicture}[
	level distance=2cm,
	level 1/.style={sibling distance=3.5cm},
	level 2/.style={sibling distance=2cm},
	every node/.style={draw=none, minimum size=9mm},
	edge from parent/.style={<-,draw},
	>=latex
	]
	\node (root) {$\mathtt z\leftarrow\mathtt x \op \mathtt y$}
	child {node (x) {$\mathtt x $}}
	child {node (y) {$\mathtt y $}};
	%\node (op) [below=of root] {$\op \in \{+, -, \ast\}$};
	
	% Draw arrows from children to root
	\draw[->] (x) -- (root);
	\draw[->] (y) -- (root);
\end{tikzpicture}
	\caption{Last operation in the computation of a variable $z$, $\op \in \{+, -, \times\}$.}%$\mathcal{A}$.}
	\label{fig:last_op}
\end{figure}

\mido{This paper uses the relative error, which is not defined when the result is zero; therefore we assume that  $ x \op \mathtt y \neq 0$.}

\subsection{Base case: \texorpdfstring{$\mathtt{z}$}{} is an input}
\label{sub-sec:base-case}

The base case is straightforward. Since we assume that inputs are exact (void computations), the error is $0$, and it can be seen the last term of the trivial martingale consisting of the empty sequence. The length of this martingale is $0$, and the associated condition number is $1$.

\subsection{Addition}

Suppose that the last operation in the computation of the variable $z$ is an addition, i.e, $\mathtt z\leftarrow\mathtt x + \mathtt y$. Consider the relative errors $\Phi$, $\mathrm{X}$ and $\Psi$ associated respectively to $\mathtt x$, $\mathtt y$, and $\mathtt z$. Note $x$, $y$, and $z$ their respective exact values, and $\hat x$, $\hat y$ and $\hat z$ their computed values. Therefore:  

%\textcolor{red}{Un lecteur peut se poser la question sur la différence entre $\mathtt x$ et $\hat{x}$ !!}
\begin{equation}
\label{eq=z+}
    \begin{cases}
	\hat{x} = x(1+\Phi), \\
	\hat{y} = y(1+\mathrm{X}), \\
	\hat{z} = z(1+\Psi).
    \end{cases}
\end{equation}
We have $z = x+y$, then, there exists $\delta$ such that $\hat z = (\hat{x} + \hat{y}) (1+\delta)$. Hence,
\begin{align}
	 \Psi &= \frac{\hat z - z}{z} \notag \\
	 &= \frac{ \hat{x} + \hat{y}}{x+y} (1+\delta) -1 \notag \\
	 &= \mido{\frac{ x(1+\Phi) + y(1+\mathrm{X})}{x+y} (1+\delta) -1} \notag \\
	 &= \left(1+ \frac{x}{x+y}\Phi+\frac{y}{x+y} \mathrm{X}\right)(1+\delta) -1. \label{eq:psi}
\end{align}

Suppose by induction that there exist constants $\mathcal{K}_x \geq 1$ (bounding the condition number in the computation of $x$) and $\mathcal{K}_y \geq 1$ (bounding the condition number in the computation of $y$), and martingales $(\Phi_i)_{i=0}^{k-1}$ and $(\mathrm{X}_i)_{i=0}^{l-1}$
\mido{such that:
\begin{equation}
\label{eq:induction-sum}
    \begin{cases}
    \abs{\Phi_{i} - \Phi_{i-1}} &\le \mathcal{K}_xu(1+u)^{i-1},  \ \Phi_0=0, \ \text{and} \ \Phi=\Phi_{k-1},\\
    \abs{\mathrm{X}_{i} - \mathrm{X}_{i-1}} &\le \mathcal{K}_yu(1+u)^{i-1}, \ \mathrm{X}_0=0, \ \text{and} \ \mathrm{X}=\mathrm{X}_{l-1}.
\end{cases}
\end{equation}}
When one of $\hat{x}$ or $\hat{y}$ is exact, as mentioned in~\Cref{sub-sec:base-case}, we assume that the length of the martingale is 0 and the condition number is 1.
%with their $i^{\text{th}}$ step $\abs{\Phi_{i} - \Phi_{i-1}}$ and $\abs{\mathrm{X}_{i} - \mathrm{X}_{i-1}}$ bounded respectively by $\mathcal{K}_xu(1+u)^{i-1}$ and $\mathcal{K}_yu(1+u)^{i-1}$, such that 

\begin{lemma}
	Let $m = \max\{k,l\} + 1$. The stochastic process  $(\Psi_i)_{i=0}^{m-1}$ such that $\Psi_{m-1}=\Psi$, and for all $0\leq i<m-1$,
	$$\Psi_i= \frac{x}{x+y}\Phi_{\min\{i, k\}}+\frac{y}{x+y}\mathrm{X}_{\min\{i, l\}},$$	 
	forms a martingale. 
\end{lemma}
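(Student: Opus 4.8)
The plan is to verify the three conditions of Definition~\ref{def:martingale} for $(\Psi_i)_{i=0}^{m-1}$, working with respect to the sequence of rounding errors of the whole computation listed in a topological order of the DAG; by Lemma~\ref{lem:meanindp} this sequence is mean-independent, and — crucially — the addition error $\delta$ comes after every error appearing in the computations of $\mathtt{x}$ and $\mathtt{y}$, so $\mathbb{E}[\delta \mid \mathcal{F}]=0$ for any $\sigma$-algebra $\mathcal{F}$ generated by a subset of those earlier errors. I would arrange the filtration so that step $i$, for $1 \le i \le m-2$, reveals the information carried by the $i$-th increments of both $(\Phi_j)$ and $(\mathrm{X}_j)$ simultaneously, while step $m-1$ reveals $\delta$; then $\Psi_i$ for $i<m-1$ is by construction a function of $\Phi_{\min\{i,k\}}$ and $\mathrm{X}_{\min\{i,l\}}$, hence of $X_1,\dots,X_i$, and $\Psi_{m-1}=\Psi$ is a function of $X_1,\dots,X_{m-1}$. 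Integrability is immediate: every $\Phi_j,\mathrm{X}_j$ is bounded (a finite sum of increments each bounded via the induction hypothesis, or simply because it is built from finitely many errors of modulus $\le u$), $|\delta|\le u$, and $\Psi_i$ is a fixed affine combination of these, so $\mathbb{E}|\Psi_i|<\infty$.

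For the increment identity $\mathbb{E}[\Psi_i \mid X_1,\dots,X_{i-1}]=\Psi_{i-1}$ there are two regimes. For $1 \le i \le m-2$, I would use linearity of conditional expectation together with the fact that $(\Phi_j)$ and $(\mathrm{X}_j)$ are martingales — with the convention that a stopped martingale stays constant once its own length is exhausted — to get $\mathbb{E}[\Phi_{\min\{i,k\}}\mid X_1,\dots,X_{i-1}]=\Phi_{\min\{i-1,k\}}$ and likewise for $\mathrm{X}$; multiplying by the constants $\tfrac{x}{x+y}$ and $\tfrac{y}{x+y}$ and adding gives exactly $\Psi_{i-1}$. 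One must check the boundary behaviour of $\min\{\cdot,k\}$ and $\min\{\cdot,l\}$ carefully, but each case ($i\le k$, $i>k$, and symmetrically in $l$) either reduces to the component martingale property at an honest index, or to the trivial identity $\Phi_{k-1}=\Phi$ (resp.\ $\mathrm{X}_{l-1}=\mathrm{X}$).

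For the last step $i=m-1$, I would expand $\Psi=\bigl(1+A\bigr)(1+\delta)-1 = A + \delta(1+A)$ with $A=\tfrac{x}{x+y}\Phi+\tfrac{y}{x+y}\mathrm{X}$, observe that $m-2 = \max\{k-1,l-1\}$ so that $\Phi$ and $\mathrm{X}$ (hence $A$) are already determined by $X_1,\dots,X_{m-2}$ and equal $\Phi_{\min\{m-2,k\}}$ and $\mathrm{X}_{\min\{m-2,l\}}$, whence $\Psi_{m-2}=A$; then $\mathbb{E}[\Psi\mid X_1,\dots,X_{m-2}] = A + (1+A)\,\mathbb{E}[\delta\mid X_1,\dots,X_{m-2}] = A + 0 = \Psi_{m-2}$ by the mean-independence of $\delta$ noted above. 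Finally $\Psi_0=\tfrac{x}{x+y}\Phi_0+\tfrac{y}{x+y}\mathrm{X}_0=0$, which is the expected value consistent with the base case.

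I expect the main obstacle to be bookkeeping rather than conceptual: setting up a single filtration with respect to which both component martingales $(\Phi_j)$ and $(\mathrm{X}_j)$ remain martingales even though they may have different lengths ($k\ne l$) and — under the DAG hypothesis, which forbids shared ancestors only below multiplication nodes — may even share rounding errors. The clean way around this is to anchor everything to the global topological ordering of the algorithm's rounding errors and appeal to Lemma~\ref{lem:meanindp} together with the tower property, so that conditioning on the coarser $\sigma$-algebras generated by the truncated processes never destroys the zero-mean property of the next increment; once that is in place, all the displayed identities are routine.
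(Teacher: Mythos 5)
Your proof is correct and follows essentially the same route as the paper's: for $i \le m-2$ the process is a linear combination of the two component martingales (with the shorter one held constant at its last value once its length is exhausted), and the final step isolates $\delta$ via $\Psi = \Psi_{m-2} + (1+\Psi_{m-2})\delta$ and applies the mean independence of Lemma~\ref{lem:meanindp}. Your explicit construction of a single filtration from the topological order of the rounding errors, so that both component martingales (which may share errors and have different lengths) remain martingales with respect to it, is a bookkeeping point the paper leaves implicit, but it does not change the argument.
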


\begin{proof}
	Without loss of generality, let us assume that $k\leq l$. Then, $m=l+1$ and
	   \[
	\begin{cases}
		\Psi_i &= \frac{x}{x+y}\Phi_{i}+\frac{y}{x+y}\mathrm{X}_{i} \ \text{for all} \ 0\leq i\leq k-2 \\		
		\Psi_i &= \frac{x}{x+y}\Phi +\frac{y}{x+y}\mathrm{X}_{i} \ \text{for all} \ k-1 \leq i\leq l-1.\\		
	\end{cases}
	\] 
	Note that $(\Phi_{i})_{i=0}^{m-2}$ with $\Phi_i = \Phi$ for all $k-1 \leq i\leq m-2$ and $(\mathrm{X}_{i})_{i=0}^{m-2}$ are martingales  by induction hypothesis. Since the martingale set is a vector space, as a linear combination of them, $(\Psi_{i})_{i=0}^{m-2}$ is a martingale. \mido{Let us verify for $\Psi_{m-1}$. Using \eqref{eq:psi} and the} mean independence (Lemma~\ref{lem:meanindp}) of $\delta$ from $\Phi$ and $\mathrm{X}$ we have
	\begin{align*}
		 \mathbb{E}[\Psi_{m-1}/\Psi_{m-2}] &= \mathbb{E}\left[\left(1+ \frac{x}{x+y}\Phi+\frac{y}{x+y} \mathrm{X}\right)(1+\delta) -1/\Psi_{m-2}\right]\\
		 &=  \left(1+ \frac{x}{x+y}\Phi+\frac{y}{x+y} \mathrm{X}\right)\mathbb{E}[(1+\delta)/\Psi_{m-2}] -1\\
		 &= \mido{\Psi_{m-2} + \left(1+ \frac{x}{x+y}\Phi+\frac{y}{x+y} \mathrm{X}\right) \mathbb{E}[\delta/\Psi_{m-2}]}\\
		 &=\Psi_{m-2}.
	\end{align*} 
Thus, $(\Psi)_{i=0}^{m-1}$ is a martingale. % and $\Psi_{m-1}=\Psi$.

\end{proof}

In this lemma, we have built a martingale by induction when the last operation is an addition. 
In order to use the Azuma-Hoeffding inequality (Lemma~\ref{lem:azuma}), we have to bound martingale increments.

\begin{lemma}
\label{lem:condition-plus}
	Let $\mathcal{K}_z=\frac{\abs{x}}{\abs{x+y}}  \mathcal{K}_x+\frac{\abs{y}}{\abs{x+y}} \mathcal{K}_y$. The martingale $(\Psi_i)_{i=0}^{m-1}$ satisfies
		$$ \abs{\Psi_{i} - \Psi_{i-1}} \leq u C_i,
		$$
		where $C_i=\mathcal{K}_z (1+u)^{i-1}$ for all $1\leq i\leq m-1$.
\end{lemma}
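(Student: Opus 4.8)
The plan is to bound $|\Psi_i - \Psi_{i-1}|$ by splitting into the two regimes coming from the definition of $\Psi_i$: the "interior" indices $1 \le i \le m-2$, where $\Psi_i$ is an explicit linear combination of $\Phi_{\min\{i,k\}}$ and $\mathrm{X}_{\min\{i,l\}}$, and the final step $i = m-1$, where $\Psi_{m-1} = \Psi$ and we must control $\Psi_{m-1} - \Psi_{m-2}$ using the closed-form expression for $\Psi$ in terms of $\Phi$, $\mathrm{X}$, and $\delta$. Throughout I will assume without loss of generality $k \le l$, so $m = l+1$, exactly as in the previous lemma.

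For the interior step, I would write $\Psi_i - \Psi_{i-1} = \frac{x}{x+y}(\Phi_{\min\{i,k\}} - \Phi_{\min\{i-1,k\}}) + \frac{y}{x+y}(\mathrm{X}_{i} - \mathrm{X}_{i-1})$, take absolute values, and apply the triangle inequality together with the induction hypotheses $|\Phi_j - \Phi_{j-1}| \le K_x u (1+u)^{j-1}$ and $|\mathrm{X}_j - \mathrm{X}_{j-1}| \le K_y u (1+u)^{j-1}$. One subtlety is that when $i \ge k$ the $\Phi$-increment is either $\Phi_k - \Phi_{k-1}$ (at $i = k$) or $0$ (for $i > k$), but since $(1+u)$ is increasing the bound $K_x u (1+u)^{i-1}$ still dominates; similarly $\min\{i,k\}-1 \le i-1$. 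Collecting terms gives $|\Psi_i - \Psi_{i-1}| \le \left(\frac{|x|}{|x+y|}K_x + \frac{|y|}{|x+y|}K_y\right) u (1+u)^{i-1} = u K_z (1+u)^{i-1} = u C_i$, as desired.

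For the final step $i = m-1 = l$, I would use $\Psi = \left(1 + \frac{x}{x+y}\Phi + \frac{y}{x+y}\mathrm{X}\right)(1+\delta) - 1$ and $\Psi_{m-2} = \frac{x}{x+y}\Phi + \frac{y}{x+y}\mathrm{X}$ (here I use that for $i = m-2 = l-1 \ge k-1$ the interior formula gives $\Psi_{l-1} = \frac{x}{x+y}\Phi + \frac{y}{x+y}\mathrm{X}_{l-1} = \frac{x}{x+y}\Phi + \frac{y}{x+y}\mathrm{X}$). Subtracting, $\Psi_{m-1} - \Psi_{m-2} = \left(1 + \Psi_{m-2}\right)\delta$, so $|\Psi_{m-1} - \Psi_{m-2}| \le (1 + |\Psi_{m-2}|)u$. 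It remains to show $1 + |\Psi_{m-2}| \le K_z(1+u)^{m-2}$. I would bound $|\Psi_{m-2}| = |\Psi_{l-1}|$ by telescoping: $|\Psi_{l-1}| \le |\Psi_0| + \sum_{j=1}^{l-1} u C_j = \sum_{j=1}^{l-1} u K_z (1+u)^{j-1} = K_z\left((1+u)^{l-1} - 1\right)$, using the already-established interior bounds and $\Psi_0 = 0$. Hence $1 + |\Psi_{m-2}| \le 1 + K_z((1+u)^{l-1}-1) = 1 - K_z + K_z(1+u)^{l-1} \le K_z(1+u)^{l-1} \le K_z(1+u)^{l-1}$, where the step $1 - K_z \le 0$ uses $K_z \ge 1$ — which follows from $K_x, K_y \ge 1$ and $\frac{|x|}{|x+y|} + \frac{|y|}{|x+y|} \ge 1$. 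Since $m-2 = l-1$, this is exactly $C_{m-1} u$ after dividing through, completing the case and the proof.

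The main obstacle I anticipate is the bookkeeping in the final step: one needs $K_z \ge 1$ to absorb the constant $1$ into $K_z(1+u)^{l-1}$, and this in turn relies on the triangle-inequality fact $|x| + |y| \ge |x+y|$ together with the inductive hypotheses $K_x, K_y \ge 1$; it is worth stating this explicitly so the induction invariant "$K_z \ge 1$" propagates. The interior-step argument is routine once one notes that replacing $\min\{i,k\}$ by $i$ only weakens the exponent bound in the favorable direction.
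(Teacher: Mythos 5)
Your proposal is correct and follows essentially the same strategy as the paper: bound the interior increments by the triangle inequality and the inductive step bounds, then handle the last increment $(1+\Psi_{m-2})\delta$ by telescoping and absorbing the constant $1$ into the condition number. The only (harmless) variation is that you telescope the combined process $\Psi_{m-2}$ directly using $K_z\ge 1$, whereas the paper telescopes $\Phi$ and $\mathrm{X}$ separately to get $\abs{1+\Phi}\le K_x(1+u)^{k-1}$ and $\abs{1+\mathrm{X}}\le K_y(1+u)^{l-1}$ before recombining.
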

\begin{proof}
  \textcolor{blue}{in appendix~\ref{sec:ap-addition}.}
\end{proof}

\begin{theorem}
	\label{thm:1}
	For all $0 < \lambda <1$, the computed $\hat{z}$ in Equation~\eqref{eq=z+} satisfies under SR-nearness
	\begin{equation}
		\frac{\abs{\hat z -z}}{\abs{z}} \leq \mathcal{K}_z \sqrt{u\gamma_{2(m-1)}(u)} \sqrt{\ln(2/ \lambda)} \mido{= \mathcal{O}(u \sqrt{m})},
	\end{equation}
with probability at least $1-\lambda$, \mido{where $\gamma_m(u) = (1+u)^m -1 = \mathcal{O}(mu)$ }
\end{theorem}
\begin{proof}
  \textcolor{blue}{in appendix~\ref{sec:ap-addition}.}
\end{proof}

%\mido{The proofs of \cref{lem:condition-plus} and \cref{thm:1} are given in the appendix.}
\iffalse
\begin{proof}
	Using the Azuma-Hoeffding inequality, we have
	$$ \frac{\abs{\hat z -z}}{\abs{z}}= \abs{\Psi_{m-1}} \leq  \sqrt{\sum_{i=1}^{m-1} u^2 C_i^2} \sqrt{2\ln(2/ \lambda)},
	$$
	with probability at least $1-\lambda$. Moreover,
	\begin{align*}
		\sum_{i=1}^{m-1} u^2 C_i^2 &= u^2  \mathcal{K}_z^2 \sum_{i=1}^{m-1} (1+u)^{2(i-1)}\\
		&= u^2  \mathcal{K}_z^2  \frac{(1+u)^{2(m-1)} -1}{u^2 +2u}\\
		&\leq u  \mathcal{K}_z^2  \frac{\gamma_{2(m-1)}(u)}{2}.
	\end{align*}
Finally, we get
\begin{align*}
	\frac{\abs{\hat z -z}}{\abs{z}} \leq \mathcal{K}_z \sqrt{u\gamma_{2(m-1)}(u)} \sqrt{\ln(2/ \lambda)},
\end{align*}
with probability at least $1-\lambda$.
\end{proof}
\fi
%\begin{remark}
%    Note that in the case of the addition of two computations, $x$ with martingale length $k$ and $y$ with martingale length $l$, the length of the resulting martingale is given by $m=max\{k, l\}+1$.
%\end{remark}

%\begin{mehdi}
 %   Perhaps a remark should be added here to emphasize that $m$ is the length of the martingale and calculated by the $m=max\{k, l\}+1$ in case of addition.
%\end{mehdi}

\subsection{Multiplication}

Suppose now that the last operation is a multiplication, i.e, $\mathtt z\leftarrow\mathtt x \times \mathtt y$. Consider the relative errors $\Phi$, $\mathrm{X}$, and $\Psi$ associated respectively to $x$, $y$, and $z$. Note $x$, $y$, and $z$ their respective exact values, and $\hat x$, $\hat y$ and $\hat z$ their computed values, with

\begin{equation}
    \label{eq:z*}
    \begin{cases}
	\hat{x} = x(1+\Phi), \\
	\hat{y} = y(1+\mathrm{X}), \\
	\hat{z} = z(1+\Psi).
    \end{cases}
\end{equation}
We have $z = x\times y$, then there exists $\delta$ such that $\hat z = \hat{x} \times \hat{y} (1+\delta)$. Hence,
\begin{align*}
	\Psi &= \frac{\hat z - z}{z} \\
	&= \frac{ \hat{x} \times \hat{y}}{x\times y} (1+\delta) -1\\
	&= (1+\Phi)(1+\mathrm{X}) (1+\delta) -1.
\end{align*} 

We know by induction that $\Phi$ and $\mathrm{X}$ are the last terms of two martingales. However, the multiplication of two martingales is not necessarily a martingale. Consequently, in contrast to the addition case, we have to decide a scheduling of operations in the construction of the martingale $\Psi$. All are equivalent and lead to the same final result.

As presented in~\Cref{lem:multi-sr}, we assume that in figure~\ref{fig:last_op}, the left sub-tree is computed before the right sub-tree, which means that in the computation of $x$, we assume that we don't have any operation on $y$. Consider two martingales $(\Phi_i)_{i=0}^{k-1}$  and $(\mathrm{X}_i)_{i=0}^{l-1}$ such that $\Phi_0=0$, $\Phi=\Phi_{k-1}$, \mido{$\mathrm{X}_0=\ldots=\mathrm{X}_{k-1}=0$,}  $\mathrm{X}=\mathrm{X}_{l-1}$, and random errors in $\Phi$ are different from those of $\mathrm{X}$ (thanks to the multi-linearity of errors in the computation of $z$). The following lemma shows that $\Psi$ is the last term of a martingale built from $(\Phi_i)_{i=0}^{k-1}$  and $(\mathrm{X}_i)_{i=0}^{l-1}$.

%\textcolor{red}{peut être ici, ça sera mieux de définir $(\mathrm{X}_i)_{i=0}^{l-1}$ tel que  $\mathrm{X}_0=\ldots=\mathrm{X}_{k-1}=0$,  $\mathrm{X}=\mathrm{X}_{l-1}$ et ensuite changer dans (3.6) le $i-1$ par $i-k$ pour que ça colle avec la preuve}

\begin{lemma}
\label{lem:multi-sr}
	The stochastic process  $(\Psi_i)_{i=0}^{m-1}$ such that 
		\[\Psi_{i}=
	\begin{cases}	
		\Phi_i =(1+\Phi_i)(1+0)  -1 & \text{for all} \ 0 \leq i \leq k-1 \\
		(1+\Phi)(1+\mathrm{X}_{i-k})  -1 & \text{for all} \ k \leq i \leq m-2 \\
		(1+\Phi)(1+\mathrm{X}) (1+\delta) -1  & \text{for} \ i=m-1, \\
	\end{cases}
	\] 
	forms a martingale.
\end{lemma}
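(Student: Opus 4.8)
The plan is to verify the three conditions in Definition~\ref{def:martingale} directly, treating the defined process $(\Psi_i)_{i=0}^{m-1}$ in its three regimes. The underlying sequence of random variables is the concatenation of the errors driving $(\Phi_i)$, then the errors driving $(\mathrm{X}_i)$, and finally $\delta$; this concatenation is legitimate precisely because no random error is shared between $\Phi$ and $\mathrm{X}$ (the multi-linearity hypothesis), so the two martingales live on disjoint blocks of the underlying sequence, and conditioning on the first $i$ of them is well-defined. Integrability is immediate in each regime since each $\Psi_i$ is a polynomial in boundedly many errors each of magnitude at most $u$, and the measurability condition ($\Psi_i$ a function of the first $i$ underlying variables) holds by construction of the three cases.

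The heart of the argument is the martingale identity $\mathbb{E}[\Psi_i \mid \text{past}] = \Psi_{i-1}$, which I would split according to which regime $i$ and $i-1$ fall in. For $1 \le i \le k-1$ we have $\Psi_i = \Phi_i$ and $\Psi_{i-1} = \Phi_{i-1}$, so the identity is exactly the martingale property of $(\Phi_i)$ from the induction hypothesis. For the transition index $i=k$: here $\Psi_k = (1+\Phi)(1+\mathrm{X}_0)-1 = (1+\Phi)(1+0)-1 = \Phi = \Phi_{k-1} = \Psi_{k-1}$, so the increment is identically zero and the identity is trivial. For $k+1 \le i \le m-2$, we have $\Psi_i = (1+\Phi)(1+\mathrm{X}_{i-k})-1$ and $\Psi_{i-1} = (1+\Phi)(1+\mathrm{X}_{i-k-1})-1$; conditioning on the past, $(1+\Phi)$ is already determined (it depends only on the first $k$ underlying variables, all of which precede index $i-1 \ge k$), so it factors out of the conditional expectation, and the martingale property of $(\mathrm{X}_j)$ gives $\mathbb{E}[(1+\mathrm{X}_{i-k}) \mid \text{past}] = 1+\mathrm{X}_{i-k-1}$, yielding $\Psi_{i-1}$. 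Finally, for $i=m-1$: $\Psi_{m-1} = (1+\Phi)(1+\mathrm{X})(1+\delta)-1$ while $\Psi_{m-2} = (1+\Phi)(1+\mathrm{X})-1$; conditioning on the past, $(1+\Phi)(1+\mathrm{X})$ is determined, and by mean independence (Lemma~\ref{lem:meanindp}) $\mathbb{E}[(1+\delta)\mid \text{past}] = 1+\mathbb{E}(\delta) = 1$, so $\mathbb{E}[\Psi_{m-1}\mid\text{past}] = (1+\Phi)(1+\mathrm{X})-1 = \Psi_{m-2}$.

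I expect the main subtlety — not a computational obstacle but a conceptual one that must be stated carefully — to be the bookkeeping of \emph{which} random variables the conditioning is with respect to, and in particular justifying that $\Phi = \Phi_{k-1}$ is measurable with respect to (hence constant under) the conditioning used at steps $i \ge k$, and likewise that $\delta$ is mean independent of everything appearing in $\Phi$ and $\mathrm{X}$. This is exactly where the hypothesis that the two multiplied operands share no rounding error is used: it guarantees the underlying sequence can be ordered as "$\Phi$-errors, then $\mathrm{X}$-errors, then $\delta$" with each $\Psi_i$ adapted to the corresponding filtration, so that Lemma~\ref{lem:meanindp} applies to $\delta$ and the two induction-hypothesis martingales embed into a common filtration without conflict. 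Once this adaptedness is made explicit, the three-case verification above closes the proof, with $\Psi_{m-1} = \Psi$ holding by the definition of the last case.
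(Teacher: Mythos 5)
Your proposal is correct and follows essentially the same route as the paper's proof: a case analysis over the three regimes, using the induction-hypothesis martingale property of $(\Phi_i)$ and $(\mathrm{X}_i)$ for the first two blocks and mean independence (Lemma~\ref{lem:meanindp}) of $\delta$ for the final step. Your handling of the transition index $i=k$ (observing that $\Psi_k=\Phi=\Psi_{k-1}$ identically since $\mathrm{X}_0=0$) and your explicit bookkeeping of the filtration are, if anything, slightly cleaner than the paper's version, but they do not change the argument.
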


\begin{proof}
	 For all $0\leq i \leq k-1$, by construction of $\Psi$, we have $\Psi_{i}= \Phi_{i}$. Since $(\Phi_i)_{i=0}^{k-1}$ is a martingale, we have
	 $$\mathbb{E}[\Psi_i/\Psi_{i-1}]= \mathbb{E}[\Phi_i/\Psi_{i-1}] = \Phi_{i-1}=\Psi_{i-1}.$$
	   Moreover, for the $k^{\text{th}}$ term we have
	\begin{align*}
		\mathbb{E}[\Psi_k/\Psi_{k-1}] &= \mathbb{E}[(1+\Phi)(1+\mathrm{X}_1)  -1/\Psi_{k-1}] \\
		&= (1+\Phi) \mathbb{E}[(1+\mathrm{X}_1)/\Psi_{k-1}] -1\\
		&= \mido{ (1+\Phi)(1+\mathrm{X}_0) -1}, \quad \mido{(\mathrm{X}_i)_{i=0}^{l-1} \ \text{is a martingale},} \\
		&= (1+\Phi)  -1,  \quad \mido{\mathrm{X}_0=0,} \\
		 &= \Phi.
	\end{align*}
	Since  $(\mathrm{X}_{i-k})_{i=k}^{m-2}$ is a martingale, for all $k < i \leq m-2$, 
	\begin{align*}
		\mathbb{E}[\Psi_i/\Psi_{i-1}] &= \mathbb{E}[(1+\Phi)(1+\mathrm{X}_{i-k})  -1/\Psi_{i-1}]\\
		 &= (1+\Phi)\mathbb{E}[(1+\mathrm{X}_{i-k})/\Psi_{i-1}]  -1 \\
		 &= (1+\Phi)(1+\mathrm{X}_{i-k-1})  -1 = \Psi_{i-1}.
	\end{align*}
	By mean independence of $\delta$ and $\Psi_{m-2}$, we get
	\begin{align*}
		\mathbb{E}[\Psi_{m-1}/\Psi_{m-2}] &= \mathbb{E}[(1+\Phi)(1+\mathrm{X}_{l-1})(1+\delta)  -1/\Psi_{m-2}]\\
		 &= (1+\Phi)(1+\mathrm{X}_{l-1}) \mathbb{E}[(1+\delta)/\Psi_{m-2}]  -1 \\
		 &= \mido{(1+\Phi)(1+\mathrm{X}_{l-1}) -1}\\
		 &= \Psi_{m-2}.
	\end{align*}	
\end{proof}

In order to use Azuma-Hoeffding inequality (Lemma~\ref{lem:azuma}), we need to bound the martingale increments.
We can show by induction that there exist constants $\mathcal{K}_x \geq 1$ (bounding the condition number in the computation of $x$) and $\mathcal{K}_y\geq 1$ (bounding the condition number in the computation of $y$), such that the $i^{\text{th}}$ steps \mido{satisfy 
\begin{equation}
\label{eq:induction-mult}
    \begin{cases}
    \abs{\Phi_{i} - \Phi_{i-1}} &\le \mathcal{K}_xu(1+u)^{i-1} \quad \text{for all} \ 0 \leq i \leq k-1 \\
    \abs{\mathrm{X}_{i} - \mathrm{X}_{i-1}} &\le \mathcal{K}_y u(1+u)^{i-1} \quad \text{for all} \ k \leq i \leq m-1, \\
\end{cases}
\end{equation}
} because $\mathrm{X}_{j}=0$  for all $0\leq j \leq k-1$.

\begin{lemma}
\label{lem:condition-product}
	Let $\mathcal{K}_z= \mathcal{K}_x \mathcal{K}_y$. The martingale $(\Psi_i)_{i=0}^{m-1}$ satisfies
	$$ \abs{\Psi_{i} - \Psi_{i-1}} \leq u C_i,
	$$
	where $C_i=\mathcal{K}_z (1+u)^{i-1}$ for all $1\leq i\leq m-1$.
\end{lemma}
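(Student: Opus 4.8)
The plan is to bound the three regimes of $i$ separately, following the piecewise definition of $(\Psi_i)_{i=0}^{m-1}$ from Lemma~\ref{lem:multi-sr}, and in each regime reduce the increment $\abs{\Psi_i - \Psi_{i-1}}$ to the known bounds on the increments of the component martingales $(\Phi_i)$ and $(\mathrm{X}_i)$, together with the estimates $\abs{1+\Phi} \le K_x(1+u)^{k-1}$ and $\abs{1+\mathrm{X}} \le K_y(1+u)^{l-1}$ already established (by the telescoping-sum argument used in the proof of Lemma~\ref{lem:condition-plus}, which applies verbatim here since $\Phi_0=0$ and $\mathrm{X}_0=0$).

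First, for $1 \le i \le k-1$ we have $\Psi_i = \Phi_i$, so $\abs{\Psi_i - \Psi_{i-1}} = \abs{\Phi_i - \Phi_{i-1}} \le K_x u (1+u)^{i-1} \le K_x K_y u (1+u)^{i-1} = u C_i$, using $K_y \ge 1$. Second, for $k \le i \le m-2$ we have $\Psi_i - \Psi_{i-1} = (1+\Phi)(\mathrm{X}_{i-k} - \mathrm{X}_{i-k-1})$ (with the convention $\mathrm{X}_0 = 0$ at $i=k$), hence
\begin{align*}
\abs{\Psi_i - \Psi_{i-1}} &\le \abs{1+\Phi}\,\abs{\mathrm{X}_{i-k} - \mathrm{X}_{i-k-1}} \\
&\le K_x (1+u)^{k-1} \cdot K_y u (1+u)^{i-k-1} = K_z u (1+u)^{i-2} \le u C_i.
\end{align*}
Third, for $i = m-1$ we have $\Psi_{m-1} - \Psi_{m-2} = (1+\Phi)(1+\mathrm{X})\delta$, so
\begin{align*}
\abs{\Psi_{m-1} - \Psi_{m-2}} &\le \abs{1+\Phi}\,\abs{1+\mathrm{X}}\,\abs{\delta} \\
&\le K_x(1+u)^{k-1}\, K_y (1+u)^{l-1}\, u = K_z u (1+u)^{k+l-2} = K_z u (1+u)^{m-3} \le u C_{m-1},
\end{align*}
using $m = k + l + 1$ so that $k + l - 2 = m - 3 \le m - 2$.

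The argument is essentially bookkeeping: the only mild subtlety is keeping the index shifts in the middle regime consistent (the increment at $i = k$ involves $\mathrm{X}_1 - \mathrm{X}_0$ with $\mathrm{X}_0 = 0$, matching the definition $\Psi_k = (1+\Phi)(1+\mathrm{X}_1)-1$ and $\Psi_{k-1} = \Phi$, which forces one to note $(1+\Phi)\cdot 1 - 1 = \Phi$), and making sure the exponents of $(1+u)$ never exceed $i-1$ — which holds comfortably since in regimes two and three we actually get exponents $i-2$ and $m-3$. I expect no real obstacle; the proof is a direct case analysis invoking the inductive hypotheses on $K_x$, $K_y$ and the already-derived bound $\abs{1+\Phi+\text{(errors)}} \le K(1+u)^{\text{length}}$.
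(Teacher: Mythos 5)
Your proof is correct and follows essentially the same route as the paper's: a case analysis over the three regimes of the piecewise definition from Lemma~\ref{lem:multi-sr}, using the inductive increment bounds on $(\Phi_i)$ and $(\mathrm{X}_i)$ together with the telescoping estimates $\abs{1+\Phi}\le K_x(1+u)^{k-1}$ and $\abs{1+\mathrm{X}}\le K_y(1+u)^{l-1}$ carried over from the addition case. The only difference is bookkeeping: you track the index shift in the middle regime more carefully and therefore land on slightly smaller exponents ($i-2$ and $m-3$) than the paper's ($i-1$ and $m-2$), but both are dominated by $C_i$, so the conclusion is unaffected.
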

\begin{proof}
  \textcolor{blue}{in appendix~\ref{sec:ap-multiplication}.}
\end{proof}

\iffalse
\begin{proof}
	For all $1\leq i \leq k-1$, $\abs{\Psi_{i} - \Psi_{i-1}}= \abs{\Phi_{i} - \Phi_{i-1}} \leq \mathcal{K}_x u (1+u)^{i-1}$. Moreover, for all $k \leq i \leq m-2$,
	\begin{align*}
		\abs{\Psi_{i} - \Psi_{i-1}} &= \abs{(1+\Phi)(1+\mathrm{X}_i) - (1+\Phi)(1+\mathrm{X}_{i-1})}\\
		&= \abs{(1+\Phi)(\mathrm{X}_i - \mathrm{X}_{i-1})}\\
		&\leq   \abs{1+\Phi_{i}} \mathcal{K}_y u (1+u)^{i-k}.
	\end{align*}
	As for the summation case, $\abs{1+\Phi_{i}} \leq \mathcal{K}_x (1+u)^{k-1}$. Then, for all $k \leq i \leq m-2$,
	$$ 	\abs{\Psi_{i} - \Psi_{i-1}} \leq \mathcal{K}_x (1+u)^{k-1}\mathcal{K}_y u (1+u)^{i-k} = \mathcal{K}_z u (1+u)^{i-1}.
	$$
	Finally, we obtain
	\begin{align*}
		\abs{\Psi_{m-1} - \Psi_{m-2}} &= \abs{(1+\Phi)(1+\mathrm{X})\delta}\\
		&\leq u \mathcal{K}_x (1+u)^{k-1} \mathcal{K}_y (1+u)^{m-k-1}\\
		&\leq  u \mathcal{K}_z (1+u)^{m-2}. 
	\end{align*}
\end{proof}

\fi
\begin{theorem}
	\label{thm:2}
	For all $0 < \lambda <1$, the computed $\hat{z}$ in Equation~\eqref{eq:z*} satisfies under SR-nearness
	\begin{equation}
		\frac{\abs{\hat z -z}}{\abs{z}} \leq \mathcal{K}_z \sqrt{u\gamma_{2(m-1)}(u)} \sqrt{\ln(2/ \lambda)} \mido{= \mathcal{O}(u \sqrt{m})},
	\end{equation}
	with probability at least $1-\lambda$.
\end{theorem}
\begin{proof}
  \textcolor{blue}{in appendix~\ref{sec:ap-multiplication}.}
\end{proof}

\iffalse
\begin{proof}
	Using the Azuma-Hoeffding inequality, we have
	$$ \frac{\abs{\hat z -z}}{\abs{z}}= \abs{\Psi_{m-1}} \leq  \sqrt{\sum_{i=1}^{m-1} u^2 C_i^2} \sqrt{2\ln(2/ \lambda)},
	$$
	with probability at least $1-\lambda$. Moreover,
	\begin{align*}
		\sum_{i=1}^{m-1} u^2 C_i^2 &= u^2  \mathcal{K}_z^2 \sum_{i=1}^{m-1} (1+u)^{2(i-1)}\\
		&= u^2  \mathcal{K}_z^2  \frac{(1+u)^{2(m-1)} -1}{u^2 +2u}\\
		&\leq u  \mathcal{K}_z^2  \frac{\gamma_{2(m-1)}(u)}{2}.
	\end{align*}
	Finally, we get
	\begin{align*}
		\frac{\abs{\hat z -z}}{\abs{z}} \leq \mathcal{K}_z \sqrt{u\gamma_{2(m-1)}(u)} \sqrt{\ln(2/ \lambda)},
	\end{align*}
	with probability at least $1-\lambda$.
\end{proof}
\fi

%\mido{The proofs of \cref{lem:condition-product} and \cref{thm:2} are given in the appendix.}
Theorems~\ref{thm:1} and~\ref{thm:2} show that the error of any algorithm based on elementary operations $\{+, -, \times\}$ and with multi-linear errors has a probabilistic bound in $\mathcal{O}(\sqrt{n}u)$, where $n$ is the number of operations.

\section{Error analysis using the proved generalization}
\label{sec:examples}

In this section, we apply this generalization to algorithms based on elementary operations $\{+, -, \times\}$ with multi-linear errors. First, we consider the pairwise summation algorithm that involves only additions, and we show how this method computes the martingale's length and the condition number. We obtain the same result proved in~\cite{el2023bounds, hallman2022precision} for this algorithm. Next, we analyze the Horner algorithm that combines additions and multiplications, which illustrates the effect of multiplication on the martingale length. We obtain the same result proved in~\cite{positive} for this algorithm. Finally, we examine the Karatsuba algorithm, which demonstrates the flexibility of this method in handling DAGs where, in the case of multiplication, two nodes do not share errors. %This is a necessary condition to maintain errors multi-linearity.

\subsection{Pairwise summation}
\label{sub-sec:pariwise}
We investigate the forward error made by the pairwise summation algorithm under SR. Section~\ref{sec:sum-product-trees} demonstrates that the error generated by this algorithm forms a martingale. In the following, we illustrate how the generalization presented in the previous section can be applied to compute the length of this martingale and bound the condition number. We thus use the Azuma-Hoeffding inequality to compute a probabilistic bound for the error. For illustrative purposes, let’s consider $z= \sum_{i=1}^{n}  x_i$ such that ($\lceil n/2 \rceil$ is the smallest integer more than or equal to $n/2$):

\begin{center}
\scalebox{0.9}{
\begin{tikzpicture}[
	level 1/.style={sibling distance=60mm},
	level 2/.style={sibling distance=30mm},
	level 3/.style={sibling distance=20mm},
	level 4/.style={sibling distance=10mm},
	every node/.style={draw=none, minimum size=7mm},
	edge from parent/.style={<-,draw},
	>=latex
	]
	\node (root) {$m=h; +; \mathcal{K} = \frac{\sum_{i=1}^{n} \abs{ x_i}}{\abs{ \sum_{i=1}^{n} x_i }}$}
	child {node {$m=h-1; +; \frac{\sum_{i=1}^{\lceil n/2 \rceil } \abs{ x_i}}{\abs{ \sum_{i=1}^{\lceil n/2 \rceil} x_i }}$}
		child {node {\textcolor{blue}{$\ldots$}}
			child {node {$m=1; +; \frac{\abs{x_1}+\abs{x_{2}}}{\abs{x_1 + x_{2}}}$}
				child {node {$x_1$}}
				child {node {$x_2$}}
			}
			child {node {}%$1; +; \frac{\abs{x_3}+\abs{x_{4}}}{\abs{x_3 + x_{4}}}$}
				child {node {}}
				child {node {}}
			}
		}
		child {node {\textcolor{blue}{$\ldots$}}
			child {node {}
				child {node {}}
				child {node {}}
			}
			child {node {}
				child {node {}}
				child {node {}}
			}
		}
	}
	child {node {$m=h-1; +;  \frac{\sum_{i=\lceil n/2 \rceil +1}^{n} \abs{ x_i}}{\abs{ \sum_{i=\lceil n/2 \rceil +1}^{n} x_i }}$}
		child {node {\textcolor{blue}{$\ldots$}}
			child {node {}
				child {node {}}
				child {node {}}
			}
			child {node {}
				child {node {}}
				child {node {}}
			}
		}
		child {node {\textcolor{blue}{$\ldots$}}
			child {node {}
				child {node {}}
				child {node {}}
			}
			child {node {$m=1; +; \frac{\abs{x_{n-1}}+\abs{x_{n}}}{\abs{x_{n-1} + x_{n}}}$}
				child {node {$x_{n-1}$}}
				child {node {$x_{n}$}}
			}
		}
	};
	
\end{tikzpicture}
	}
\end{center}
At each internal node, we have:
\begin{itemize}
    \item On the left, $m$ represents the martingale length. In this case, $m = 1, 2, \ldots,h$, where $h$ is the height of \mido{the} tree.
    \item In the middle, we have the elementary operation between the two children. In this case, only additions are considered.
    \item On the right, we have the current condition number from the leaves up to this node.
\end{itemize}

\mido{
\begin{theorem}
    For all $0 < \lambda <1$ and $z= \sum_{i=1}^{n}  x_i$, the computed $\hat{z}$ satisfies under SR
\begin{equation}
    \label{pairwise:martingale-bound}
    \frac{\abs{\hat{z} - z}}{\abs{z}} \leq \mathcal{K} \sqrt{u\gamma_{2h}(u)} \sqrt{\ln(2/ \lambda)} =\mathcal{O}(\sqrt{h}u),
\end{equation}
with probability at least $1-\lambda$, where $h= \lfloor \log_2(n) \rfloor$.
\end{theorem}}

\begin{proof}
Since there are only additions, $m$ is $\max\{m_l, m_r\} + 1$, where $m_l$ and $m_r$ are the martingale lengths at the left and right sub-trees, respectively. We assume that the inputs are exact, so $m=0$ at each leaf. Consequently, since we add one at each step, at the root, $m=h$, the height of the tree.

Let us compute the condition number of the root. From Lemma~\ref{lem:condition-plus}, we have $\mathcal{K}=\frac{\abs{x}}{\abs{x+y}}  \mathcal{K}_x+\frac{\abs{y}}{\abs{x+y}} \mathcal{K}_y$ where $x = \sum_{i=1}^{\lceil n/2 \rceil}  x_i$, $y = \sum_{i=\lceil n/2 \rceil +1}^{n}  x_i$, and 
\begin{align*}
      \mathcal{K}_x = \frac{\sum_{i=1}^{\lceil n/2 \rceil } \abs{ x_i}}{\abs{ \sum_{i=1}^{\lceil n/2 \rceil} x_i }} = \frac{\sum_{i=1}^{\lceil n/2 \rceil } \abs{ x_i}}{\abs{x}}, \qquad
        \mathcal{K}_y = \frac{\sum_{i=\lceil n/2 \rceil +1}^{n} \abs{ x_i}}{\abs{ \sum_{i=\lceil n/2 \rceil +1}^{n} x_i }} = \frac{\sum_{i=\lceil n/2 \rceil +1}^{n} \abs{ x_i}}{\abs{y}}.
\end{align*}
We thus have $x+y  = \sum_{i=1}^{n}  x_i$ and
\begin{align*}
    \mathcal{K} &=\frac{\abs{x}}{\abs{x+y}}  \mathcal{K}_x+\frac{\abs{y}}{\abs{x+y}} \mathcal{K}_y
    = \frac{1}{\abs{\sum_{i=1}^{n}  x_i}} \left( \abs{x}\mathcal{K}_x + \abs{y} \mathcal{K}_y \right)\\
    &= \frac{1}{\abs{\sum_{i=1}^{n}  x_i}} \left( \sum_{i=1}^{\lceil n/2 \rceil } \abs{ x_i} + \sum_{i=\lceil n/2 \rceil +1}^{n} \abs{ x_i} \right)
    = \frac{\sum_{i=1}^{n} \abs{ x_i}}{\abs{ \sum_{i=1}^{n} x_i }}.
\end{align*}
Finally, Theorem~\ref{thm:1} shows that 
\begin{equation*}
    \frac{\abs{\hat{z} - z}}{\abs{z}} \leq \mathcal{K} \sqrt{u\gamma_{2h}(u)} \sqrt{\ln(2/ \lambda)} =\mathcal{O}(\sqrt{h}u),
\end{equation*}
with probability at least $1-\lambda$.
\end{proof}
Interestingly, the bound in~(\ref{pairwise:martingale-bound}) is identical to the bound proved in~\cite{el2023bounds} for the pairwise summation using the AH method.
This proof can easily be adapted to any summation tree, leading to a bound of $\sqrt{h}u$ with high probability, with $h$ the height of the tree. 
    
\subsection{Horner algorithm}
\label{sub-sec:horner}
The previous example only had additions. \mido{Let us now apply the results of} Section~\ref{sec:sum-product-trees} to Horner's polynomial evaluation, with both additions and multiplication.  
Let $P(x) = \sum_{i=0}^n a_i x^i$, Horner's algorithm consists in writing this polynomial as 
$P(x)= (((a_nx +a_{n-1})x +a_{n-2})x \ldots +a_1)x +a_0.
$
\begin{center}
\begin{tikzpicture}[
    level 1/.style={sibling distance=24mm},
    level 2/.style={sibling distance=17mm},
    level 3/.style={sibling distance=20mm},
    level 4/.style={sibling distance=17mm},
    level 5/.style={sibling distance=17mm},
    every node/.style={draw=none, minimum size=7mm},
    edge from parent/.style={<-,draw},
    >=latex
    ]
    \node (root) {$m=2n$; $+$; $\frac{\sum_{i=0}^{n} \abs{ a_i x^i }}{\abs{ \sum_{i=0}^{n}  a_i x^i }}$}
    child {node {$2n-1$; $\times$; $\frac{\sum_{i=1}^{n} \abs{ a_i x^i }}{\abs{ \sum_{i=1}^{n}  a_i x^i }}$} 
        child {node {$\ldots$}
            child {node {$2$; $+$; $\frac{\abs{a_n x}+\abs{a_{n-1}}}{\abs{a_n x+ a_{n-1}}}$}
                child {node {$1$; $\times$; $1$}
                    child {node {$a_n$}}
                    child {node {$x$}}
                }
                child {node {$a_{n-1}$}}
            }
            child {node {$x$}}
        }
        child {node {$\ldots$}}
    }
    child {node {$a_0$}};
\end{tikzpicture}
\end{center}
As in the previous example, internal nodes represent three elements. They show the martingale length on the left, the operation between child nodes in the middle, and the condition number from the leaves to the node on the right.
\mido{
\begin{theorem}
    For all $0 < \lambda <1$, the computed $\hat{P}(x)$ satisfies under SR
\begin{equation}
\label{horner}
    \frac{\abs{\hat{P}(x) - P(x)}}{\abs{P(x)}} \leq \mathcal{K} \sqrt{u \gamma_{4n}(u)} \sqrt{\ln(2/ \lambda)} = \mathcal{O}(\sqrt{n}u),
\end{equation}
with probability at least $1-\lambda$.
\end{theorem}
}

\begin{proof}
Let recall that $m$ is $\max\{m_l, m_r\} + 1$ for additions and $m_l + m_r + 1$ for multiplications, where $m_l$ and $m_r$ are the martingale lengths at the left and right sub-trees, respectively. We suppose that $m=0$ for leaves. In Horner's algorithm, $m_r=0$, so that both in additions and multiplications, $m=m_l+1$. Since there are $n$ additions and $n$ multiplications, we have $m=2n$. 

Let us compute the condition number bound. The first operation is a multiplication between $a_n$ and $x$. According to Lemma~\ref{lem:condition-product}, the condition number is $1$. The second operation is an addition between $a_n x$ and $a_{n-1}$. According to Lemma~\ref{lem:condition-plus}, the condition number is $\frac{\abs{a_n x} + \abs{a_{n-1}}}{\abs{a_n x + a_{n-1}}}$. For the root, we \mido{sum $a_0$ and $\sum_{i=1}^{n}  a_i x^i$, then} 
\begin{align*}
    \mathcal{K} &=\frac{\abs{\sum_{i=1}^{n}  a_i x^i}}{\abs{\sum_{i=1}^{n}  a_i x^i + a_0}}  \frac{\sum_{i=1}^{n} \abs{ a_i x^i }}{\abs{ \sum_{i=1}^{n}  a_i x^i }} +\frac{\abs{a_0}}{\abs{\sum_{i=1}^{n}  a_i x^i + a_0}} \\
    &= \frac{\sum_{i=0}^{n} \abs{ a_i x^i }}{\abs{ \sum_{i=0}^{n}  a_i x^i }}.
\end{align*}
Note that the condition number remains the same in the case of multiplication by an input. Finally, Theorem~\ref{thm:1} and Theorem~\ref{thm:2} show
\begin{equation*}
    \frac{\abs{\hat{P}(x) - P(x)}}{\abs{P(x)}} \leq \mathcal{K} \sqrt{u \gamma_{4n}(u)} \sqrt{\ln(2/ \lambda)} = \mathcal{O}(\sqrt{n}u),
\end{equation*}
with probability at least $1-\lambda$.
\end{proof}
Interestingly, the bound in~(\ref{horner}) is identical to the bound proved in~\cite[thm IV.2]{positive} for the Horner algorithm.

\subsection{Karatsuba polynomial multiplication}
\label{sub-sec:karatsuba}

Karatsuba multiplication \cite{karatsuba1995complexity} is a divide-and-conquer algorithm that reduces the
number of multiplications in the product of two polynomials\footnote{We target polynomials with a number of coefficients that is a power-of-2.}. There are different variants; here, we consider the substractive variant. 

Let us consider two polynomials $A$ and $B$ of degree $2^n-1$.

\begin{itemize}

  \item If $n = 0$, the Karatsuba product of $A$ and $B$ reduces to a scalar multiplication, $K(A, B) = a_0.b_0$.

  \item If $n \ge 1$, we write $A = A_h.X^{2^{n-1}} + A_l$ and $B = B_h.X^{2^{n-1}} + B_l$
        where $A_h$ and $B_h$ capture the high order coefficients and $A_l$ and $B_l$
        capture the low order coefficients of $A$ and $B$ respectively. Then the product of $A$ and $B$ is
        \begin{equation*}
          K(A, B) =  P_2.X^{2^n} + (P_0 + P_1 + P_2).X^{2^{n-1}} + P_0
        \end{equation*}
        where $P_0 = K(A_l,B_l)$, $P_2 = K(A_h, B_h)$ and $P_1 = K(A_h - A_l, B_l - B_h)$.
\end{itemize}

We can note that this recursive step uses only three polynomial multiplications
instead of four in a recursive formulation of the classical multiplication algorithm, leading to a complexity of $O(n^{\log_2(3)})$ instead of $O(n^2)$. 

The following result allows to apply the results proven in the previous section. Figure~\ref{KaraDAGdeg3} illustrates on a product of polynomials of degree 3 how one term of all the multiplications in the algorithm results from computations on $A$, and the other from computations on $B$, which is key to the proof of the theorem. 

\begin{theorem} If $A$ and $B$ result of independent computations, $K(A, B)$ has a martingale-inducing computation DAG.
\end{theorem}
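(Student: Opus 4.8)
The plan is to prove, by induction on the recursion depth $n$, that the computation DAG of $K(A,B)$ satisfies the multi-linearity condition isolated in Section~\ref{sec:sum-product-trees}: every multiplication node has two operands whose computation cones share no error-carrying (i.e.\ non-input) node. Once this is established, the multiplication step of that section (Lemma~\ref{lem:multi-sr} and Lemma~\ref{lem:condition-product}) and the addition step apply verbatim, so the martingale is built and Corollaries~\ref{cor1}--\ref{cor2} give the probabilistic bound; hence the DAG is martingale-inducing.

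The central object I would track is a \emph{side-labelling} of all intermediate polynomials produced by the algorithm. The claim to push through the induction is: for every recursive call $K(A',B')$ that arises while evaluating $K(A,B)$, the polynomial $A'$ is obtained from the coefficients of $A$ using only additions and subtractions, and $B'$ likewise from the coefficients of $B$; moreover every rounding error created in forming $A'$ lives in the sub-DAG computing $A$ augmented by these new $A$-only operations, and symmetrically for $B'$. The base case is the top call $K(A,B)$ with $A'=A$, $B'=B$: here the hypothesis that $A$ and $B$ \emph{result of independent computations} is exactly the statement that their DAGs share no non-input node, so the $A$-side and $B$-side error sets are disjoint. For the inductive step, a call $K(A',B')$ with $n\ge 1$ spawns $K(A'_l,B'_l)$, $K(A'_h,B'_h)$ and $K(A'_h-A'_l,\,B'_l-B'_h)$; splitting a polynomial into its high and low halves introduces no operation, while the two differences are formed coefficient-wise, the first entirely from $A'$'s coefficients (hence still on the $A$-side) and the second entirely from $B'$'s, so the invariant is preserved on each side. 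The remaining new nodes at this level are the polynomial additions assembling $P_2X^{2^n}+(P_0+P_1+P_2)X^{2^{n-1}}+P_0$ (shifts by powers of $X$ carry no error); being additions, they create no multiplication and are irrelevant to the condition. Consequently, every multiplication in the whole DAG is a base-case scalar product $a_0\cdot b_0$ of a call $K((a_0),(b_0))$ with $a_0$ depending only on $A$-side data and errors and $b_0$ only on $B$-side data and errors, so its two operand cones are error-disjoint. Figure~\ref{KaraDAGdeg3} exhibits this for $n=2$: each scalar multiplication pairs a node produced by computations on $A$ with one produced by computations on $B$.

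I expect the difficulty to be bookkeeping rather than conceptual. Two points deserve care: first, making precise what "independent computations" grants (no shared non-input node between the DAGs of $A$ and $B$, so no shared rounding error); second, verifying that the side-labelling genuinely survives the nested differences $A'_h-A'_l$ — in particular that two $A$-side quantities are never multiplied with each other, which holds because each recursive call always contracts exactly one $A$-descendant against one $B$-descendant and never two of the same side. I would also note explicitly, to address the "reused intermediate computations" point, that $P_0,P_1,P_2$ each feed several output coefficients, so $K(A,B)$ is a proper DAG and not a tree; but every reused node is consumed only by additions, and since the constraint in Section~\ref{sec:sum-product-trees} concerns multiplication alone, a reused node simply contributes its already-constructed martingale to each parent and linearity of the martingale space absorbs the sharing, so the construction goes through unchanged.
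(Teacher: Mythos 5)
Your proposal is correct and follows essentially the same route as the paper: induction on the recursion depth, with the key observations that $A_h-A_l$ is built only from $A$-side data and $B_l-B_h$ only from $B$-side data (so the arguments of every recursive call remain independent), and that the final assembly of $P_0$, $P_1$, $P_2$ uses only additions, where shared errors are harmless. Your explicit side-labelling invariant is just a slightly more unrolled statement of the paper's induction hypothesis, and your remark on reused intermediates being consumed only by additions matches the paper's closing observation.
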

\begin{proof}
  By induction on $n$.

  For $n=0$ then $K(A,B) = a_0.b_0$. The computation DAG is a single multiplication node and $a_0$ and $b_0$ are independent.

  For $n\ge1$, we consider $K(A,B)$ where $A$ and $B$ of degree $2^n-1$.
  First, we will show that the partial products $P_0$, $P_1$ and $P_2$ are computed with martingale-inducing DAGs.

  Because $A$ and $B$ are independent, so are $A_l$ and $B_l$.
  Therefore, $P_0 = K(A_l, B_l)$, where $A_l$ and $B_l$ are of degree $2^{n-1}-1$ has by induction a martingale-inducing DAG.
  By the same reasoning, we show that $P_2 = K(A_h, B_h)$ has a martingale-inducing DAG.

  For $P_1 = K(A_h - A_l, B_l - B_h)$, $A_h$ and $A_l$ are not necessarily independent, but they are combined using a
  subtraction operation. Same for $B_h$ and $B_l$. Moreover, the resulting polynomials $A_h - A_l$ and $B_l - B_h$
  are independent of degree $2^{n-1}-1$. Therefore, by the induction hypothesis, $P_1$ is computed with a martingale-inducing DAG.

  Finally, $K(A,B) = P_2.X^{2^n} + (P_0 + P_1 + P_2).X^{2^{n-1}} + P_0$.
  We can ignore the multiplications by $X^{2^{n}}$ and $X^{2^{n-1}}$, which only shift the position of the coefficients and
  do not introduce numerical errors.

  The coefficients of $K(A,B)$ result of sums of coefficients in $P_0$, $P_1$, and $P_2$. The operands are not always independent
  because some coefficients are shared, for instance, between $P_0$ and $P_0.X^{2^{n-1}}$. Nevertheless, all the operations in the resulting DAG are sums. Therefore, we conclude that the computation of $K(A,B)$ is martingale-inducing.
\end{proof}

  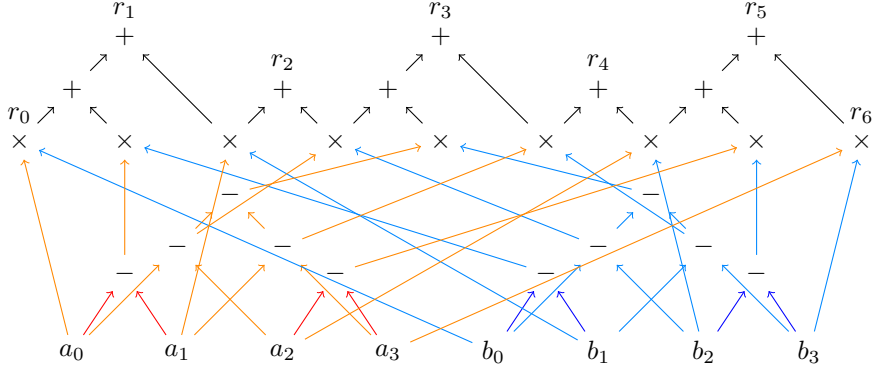
\begin{figure}
    \centering
    \begin{tikzpicture}[scale=0.7]
    \definecolor{red}{HTML}{FF0000}
    \definecolor{lightred}{HTML}{FF8800}
    \definecolor{green}{HTML}{00FF00}
    \definecolor{blue}{HTML}{0000FF}
    \definecolor{lightblue}{HTML}{0088FF}
    \node (a0) at (0,0){$a_0$};
    \node (a1) at (2,0){$a_1$};
    \node (a2) at (4,0){$a_2$};
    \node (a3) at (6,0){$a_3$};
    \node (b0) at (8,0){$b_0$};
    \node (b1) at (10,0){$b_1$};
    \node (b2) at (12,0){$b_2$};
    \node (b3) at (14,0){$b_3$};
    
    \node (m11) at (1,1.5){$-$};
    \node (m12) at (5,1.5){$-$};
    \node (m13) at (9,1.5){$-$};
    \node (m14) at (13,1.5){$-$};
    
    \path[->, red](a0) edge (m11);
    \path[->, red](a1) edge (m11);
    \path[->, red](a2) edge (m12);
    \path[->, red](a3) edge (m12);
    \path[->, blue](b0) edge (m13);
    \path[->, blue](b1) edge (m13);
    \path[->, blue](b2) edge (m14);
    \path[->, blue](b3) edge (m14);

    \node (m21) at (2,2){$-$};
    \node (m22) at (4,2){$-$};
    \node (m23) at (10,2){$-$};
    \node (m24) at (12,2){$-$};

    \path[->, lightred](a0) edge (m21);
    \path[->, lightred](a2) edge (m21);
    \path[->, lightred](a1) edge (m22);
    \path[->, lightred](a3) edge (m22);
    \path[->, lightblue](b0) edge (m23);
    \path[->, lightblue](b2) edge (m23);
    \path[->, lightblue](b1) edge (m24);
    \path[->, lightblue](b3) edge (m24);

    \node (m31) at (3,3){$-$};
    \node (m32) at (11,3){$-$};

    \path[->, lightred](m21) edge (m31);
    \path[->, lightred](m22) edge (m31);
    \path[->, lightblue](m23) edge (m32);
    \path[->, lightblue](m24) edge (m32);
    
    \node (f0) at (-1,4){$\times$};\node (r0) at (-1,4.5){$r_0$};
    \node (f1) at (1,4){$\times$};
    \node (f2) at (3,4){$\times$};
    \node (f3) at (5,4){$\times$};
    \node (f4) at (7,4){$\times$};
    \node (f5) at (9,4){$\times$};
    \node (f6) at (11,4){$\times$};
    \node (f7) at (13,4){$\times$};
    \node (f8) at (15,4){$\times$};\node (r6) at (15,4.5){$r_6$};
    
    \path[->, lightred](a0) edge (f0);
    \path[->, lightblue](b0) edge (f0);
    \path[->, lightred](m11) edge (f1);
    \path[->, lightblue](m13) edge (f1);
    \path[->, lightred](a1) edge (f2);
    \path[->, lightblue](b1) edge (f2);
    \path[->, lightred](m21) edge (f3);
    \path[->, lightblue](m23) edge (f3);
    \path[->, lightred](m31) edge (f4);
    \path[->, lightblue](m32) edge (f4);
    \path[->, lightred](m22) edge (f5);
    \path[->, lightblue](m24) edge (f5);
    \path[->, lightred](a2) edge (f6);
    \path[->, lightblue](b2) edge (f6);
    \path[->, lightred](m12) edge (f7);
    \path[->, lightblue](m14) edge (f7);
    \path[->, lightred](a3) edge (f8);
    \path[->, lightblue](b3) edge (f8);
    
    \node (p1) at (0, 5){$+$};
    \path[->](f0) edge (p1);
    \path[->](f1) edge (p1);
    \node (p2) at (1, 6){$+$};\node (r1) at (1,6.5){$r_1$};
    \path[->](p1) edge (p2);
    \path[->](f2) edge (p2);
    \node (p3) at (4, 5){$+$};\node (r2) at (4, 5.5){$r_2$};
    \path[->](f3) edge (p3);
    \path[->](f2) edge (p3);

    \node (p8) at (12, 5){$+$};
    \path[->](f6) edge (p8);
    \path[->](f7) edge (p8);
    \node (p7) at (13, 6){$+$};\node (r5) at (13,6.5){$r_5$};
    \path[->](p8) edge (p7);
    \path[->](f8) edge (p7);
    \node (p6) at (10, 5){$+$};\node (r4) at (10,5.5){$r_4$};
    \path[->](f6) edge (p6);
    \path[->](f5) edge (p6);
    
    \node (p4) at (6, 5){$+$};
    \path[->](f4) edge (p4);
    \path[->](f3) edge (p4);

    \node (p5) at (7, 6){$+$};\node (r3) at (7,6.5){$r_3$};
    \path[->](p4) edge (p5);
    \path[->](f5) edge (p5);
    \end{tikzpicture}
    \caption{Computation DAG for the Karatsuba multiplication $R=A\times B$ of two polynomials of degree three --- first three levels are the subtractions in the recursive calls, each concerning only one of $A$ or $B$; then all products are performed, with one operand coming from $A$ (blue) and the other from $B$ (red); finally, zero to two levels of additions yield the result --- different hues of blue and red for legibility purpose only.}\label{KaraDAGdeg3}
    \end{figure}

\paragraph{Length of the error martingale}

Let us now compute the length of the martingale for each coefficient of the Karatsuba product.
For $A,B$ of degree $2^n-1$, let $R = K(A,B)$ with degree $d = 2^{n+1}-2$.

\begin{equation*}
  R = r_{2^{n+1}-2}.X^{2^{n+1}-2} + \ldots + r_1.X + r_0
\end{equation*}

The coefficients of $A$ and $B$ can either be constant inputs or result of previous martingale-inducing computations.
We note $m_A$ (respectively $m_B$) the maximum martingale length of the coefficients of $A$ (respectively $B$).
When coefficients are constant, $m_A = m_B = 0$.

\begin{theorem}
  For $i \in \llbracket 0, d \rrbracket$, the length of the error martingale in the computation of coefficient $r_i$ is
  \begin{equation*}
    m(i, d) = 1 + 3 \lfloor\log_2 \min\{i+1, d-i+1\}\rfloor + m_A + m_B.
  \end{equation*}
\end{theorem}
\begin{proof}
  \textcolor{blue}{in appendix~\ref{sec:ap-karatsuba}.}
\end{proof}

\begin{table}
  \centering
  \begin{tabular}{lll}
    \toprule
    $n$ & $d$  & $m(d,d) \ldots m(1,d)\; m(0,d)$      \\
    \midrule
    0 & 0  & 1                                 \\
    1 & 2  & 1 4 1                             \\
    2 & 6  & 1 4 4 7 4 4 1                     \\
    3 & 14 & 1 4 4 7 7 7 7 10 7 7 7 7 4 4 1    \\
    \bottomrule
  \end{tabular}
  \caption{Values of $m(i, d)$ for $n \le 3$ and $m_A = m_B = 0$. \label{tab:mid}}
\end{table}

Let us consider some properties of function $m(i, d)$. Table~\ref{tab:mid} shows the values of $m(i, d)$ for $n \le 3$.
We note that:
\begin{itemize}
  \item \emph{(Property 1)} $m(i,d)$ is symmetric with respect to $d/2$, $m(i,d) = m(d-i,d)$.
  \item \emph{(Property 2)} $m(i,d)$ reaches its maximum for $i = d/2$, $m(d/2, d) = 1 + 3n + m_A + m_B$.
  \item \emph{(Property 3)} The $2^{n-1}$ coefficients to the left and to the right of the central coefficient $d/2$ have the second largest martingale length:\\
        $\forall i \in \rrbracket 2^{n-1} -2, d/2 \llbracket\;\cup\;\rrbracket d/2, d - (2^{n-1}-2) \llbracket$,
        $m(i, d) = 1 + 3(n-1) + m_A + m_B$.
\end{itemize}

It is natural that the error martingale length is smallest for the extreme degrees $r_0$ and $r_d$ since they result from a single product.
On the contrary, the coefficients around $d/2$ are the most sensitive to errors because they result from the sum of many different partial products.

\mido{
\begin{theorem}
  For all $0 < \lambda <1$ and $i \in \llbracket 0, d/2 \rrbracket$, the computed $\hat{r_i}$ satisfies under SR
  \begin{equation}
    \label{eq:r_i}
    \frac{\abs{\hat{r_i} - r_i}}{\abs{r_i}} \leq \mathcal{K}\mido{_i} \sqrt{u \gamma_{2(3\lfloor \log_2(i+1)\rfloor)}(u)} \sqrt{\ln(2/ \lambda)},
    \end{equation}
    with probability at least $1-\lambda$.
\end{theorem}}

\begin{proof}
For all $i \in \llbracket 0, d/2 \rrbracket$, \textcolor{blue}{combining the previous result on the martingale length} with Theorems~\ref{thm:1} and~\ref{thm:2} shows that
\begin{equation*}
%\label{eq:r_i}
\frac{\abs{\hat{r_i} - r_i}}{\abs{r_i}} \leq \mathcal{K}\mido{_i} \sqrt{u \gamma_{2(3\lfloor \log_2(i+1)\rfloor)}(u)} \sqrt{\ln(2/ \lambda)}
\end{equation*}
with probability at least $1-\lambda$.
The bound is maximal for the central coefficient $i=d/2$ (due to \emph{Property 1})  with $d=2^{n+1}-2$, 
\begin{equation}
    \label{eq:r_d/2}
\frac{\abs{\hat{r}_{d/2} - r_{d/2}}}{\abs{r_{d/2}}} \leq \mathcal{K}_{\mido{d/2}} \sqrt{u \gamma_{6n}(u)} \sqrt{\ln(2/ \lambda)}.
\end{equation}
\end{proof}

We perform numerical experiments for $d$ varying from 3 to $2^{16}-1$. The
computation is performed in IEEE-754 RN-binary32 and SR-nearness-binary32.
Errors are computed for the central coefficient $d/2$ against a IEEE-754 binary64 reference. For each degree,
three SR samples are computed with Verificarlo~\cite{verificarlo}. The condition number bound, $\mathcal{K}$, is computed following the lemmas \ref{lem:condition-plus} and \ref{lem:condition-product}.

First, we consider polynomials with positive coefficients uniformly sampled in $[0,1]$ in Figure~\ref{fig:karatsuba-experiment}. The bound growth is dominated by $\mathcal{K}$, which grows with $d$. Despite this, the actual error grows slowly for these inputs and stays under $2^{-20}$.

Then, we consider polynomials with coefficients uniformly sampled in $[-0.5, 0.5]$. The condition number still dominates the bound. When we have both positive and negative coefficients, catastrophic cancellations between terms trigger often, accounting for the faster growth of $\mathcal{K}$.  We observe that the SR and RN samples also show this effect, with a higher error than before: for $d=2^{16}-1$, the relative error is around $2^{-11}$.

\textcolor{blue}{In our analysis, we define $\mathcal{K}$ as an upper bound on the condition number derived from the deterministic bounds on the martingale increments associated with the algorithm’s error process. 
Unlike traditional condition numbers, which are typically defined for mathematical problems and are implementation-agnostic, our $\mathcal{K}$ is algorithm-dependent.}

\textcolor{blue}{For the previous examples, pairwise summation and Horner’s polynomial evaluation, $\mathcal{K}$ coincides with the conventional condition number, and is equal to 1 for positive inputs. However, for Karatsuba polynomial multiplication, $\mathcal{K}$ differs from that of naive multiplication due to the algorithm's structure. Specifically, the cross-product term $P_1$ introduces additional error sources that are later canceled in the final sum. This cancellation does not eliminate their contribution to $\mathcal{K}$, resulting in a $\mathcal{K}$ that grows with the input size. This growth explains the loose bound observed in our experiments, where $\mathcal{K}$ significantly influences both the bound and the error (when coefficients are sampled in $[-0.5, 0.5]$). This highlights how the algorithm's design influences the error analysis and underscores the importance of considering both the mathematical problem and the specific implementation.}

\begin{figure}
    \centering
    \includegraphics[width=0.5\linewidth]{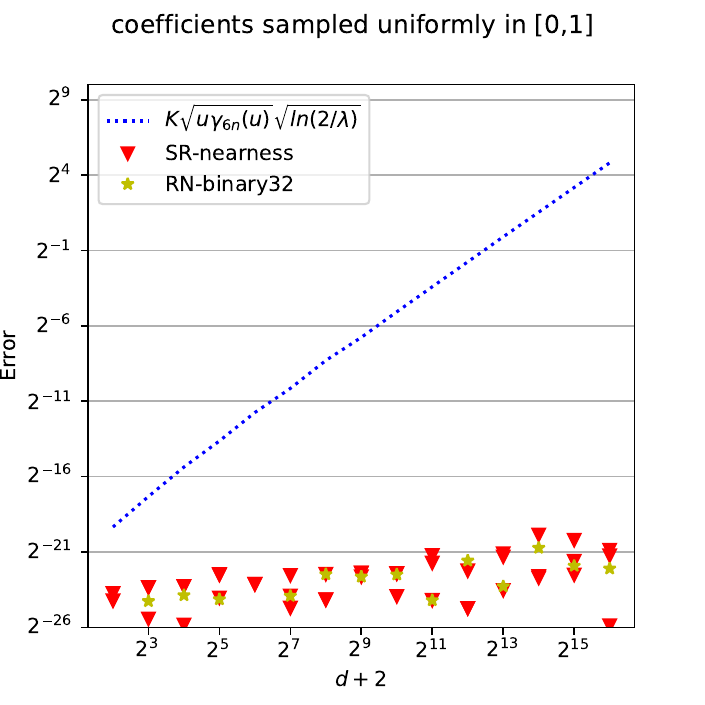}%
    \includegraphics[width=0.5\linewidth]{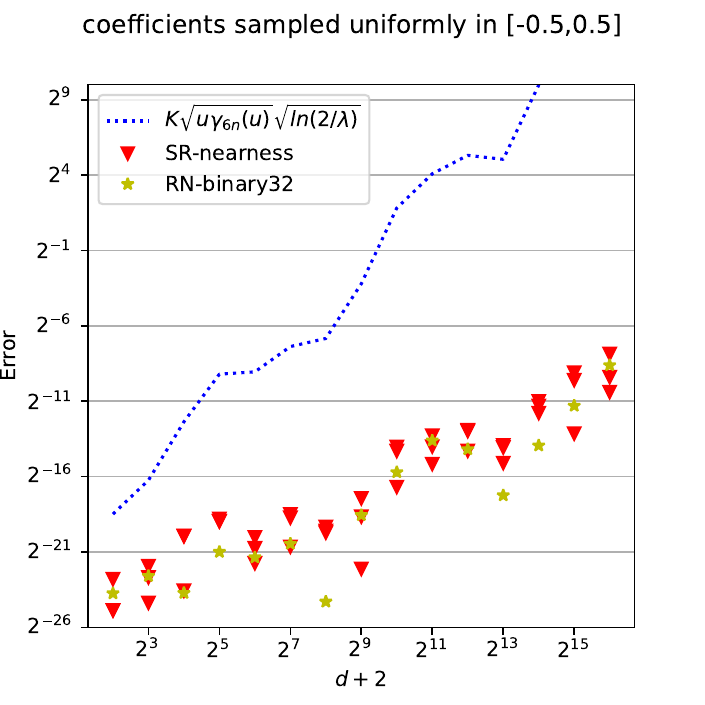}
    \caption{Relative error for the subtractive Karatsuba algorithm. In the left plot, coefficients were uniformly sampled in $[0,1]$. In the right plot, coefficients were uniformly sampled in $[-0.5,0.5]$. ($1-\lambda$ = 0.9 and $d = 2^{n+1}-2$). }
    \label{fig:karatsuba-experiment}
\end{figure}

\section{Doob-Meyer decomposition and non-linear errors}
\label{sec:DM}

To establish a comprehensive framework, we propose using the Doob-Meyer decomposition, a central result in the study of stochastic processes~\cite[p 296]{doob1953stochastic}. This decomposition separates a stochastic process into two distinct components: a martingale part and a predictable process. Let us first recall the definition of a predictable stochastic process~\cite[p 65]{dacunha2012probability}.
\begin{definition}
    Given a filtration $(\mathbb{F}_n)_{n\geq 0}$, a stochastic process $X_n$ is predictable if $X_0$ is $\mathbb{F}_{0}$-measurable, and $X_n$ is $\mathbb{F}_{n-1}$-measurable for all $n\geq 1$.
\end{definition}
This means that the value of $X_n$ is known at the previous time step. Now, let us state the Doob–Meyer decomposition. 

\begin{theorem}[Doob–Meyer decomposition]
	\label{thm:doob}
	Let $(\mathbb{F}_{k})_{0\leq k \leq n}$ and $X_0,\ldots,X_n$ an adapted stochastic process locally integrable, meaning that $\mathbb{E}(\abs{X_k}) < \infty$ for all $0\leq k \leq n$. There exists a martingale $M_0,\ldots,M_n$ and a predictable integrable sequence $A_0,\ldots,A_n$  starting with $A_0 =0$ for which we have:
	\[
	\begin{cases}
		X_n &= M_n + A_n, \\
		
		A_n &= \sum_{k=1}^{n} \mathbb{E}[X_k - X_{k-1}/ \mathbb{F}_{k-1}],\\
		\mathbb{E}(M_n) &=0.
	\end{cases}
	\] 
	This decomposition is almost surely unique.
\end{theorem}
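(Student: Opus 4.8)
The plan is to prove the theorem constructively: exhibit the pair $(M,A)$ explicitly, verify the three listed properties, and then dispose of uniqueness with a short martingale argument. I would take the predictable part to be exactly the cumulative sum prescribed in the statement,
\[
A_0 := 0, \qquad A_k := \sum_{j=1}^{k} E[X_j - X_{j-1} / \mathbb{F}_{j-1}] \quad (1 \le k \le n),
\]
and then \emph{define} $M_k := X_k - A_k$. With these choices the identity $X_n = M_n + A_n$ is immediate and $A_n$ has the required form, so all the substance is in checking that $A$ is predictable and integrable and that $M$ is a martingale.

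For predictability I would note that every summand $E[X_j - X_{j-1}/\mathbb{F}_{j-1}]$ is $\mathbb{F}_{j-1}$-measurable by definition of conditional expectation, hence $\mathbb{F}_{k-1}$-measurable whenever $j \le k$; a finite sum of such terms is then $\mathbb{F}_{k-1}$-measurable, and $A_0 = 0$ is trivially $\mathbb{F}_0$-measurable. Integrability of each $A_k$ follows from the local integrability of $X$ together with conditional Jensen and the tower property, $E(\abs{E[X_j-X_{j-1}/\mathbb{F}_{j-1}]}) \le E(\abs{X_j}) + E(\abs{X_{j-1}}) < \infty$. For the martingale property the key observation is that $A_k - A_{k-1} = E[X_k-X_{k-1}/\mathbb{F}_{k-1}]$ is already $\mathbb{F}_{k-1}$-measurable, so
\[
E[M_k - M_{k-1} / \mathbb{F}_{k-1}] = E[X_k - X_{k-1} / \mathbb{F}_{k-1}] - (A_k - A_{k-1}) = 0,
\]
which is the defining identity $E[M_k/\mathbb{F}_{k-1}] = M_{k-1}$; adaptedness and integrability of $M$ are inherited from those of $X$ and $A$. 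Taking expectations in the martingale identity gives $E(M_n) = E(M_0) = E(X_0)$, which equals $0$ because in our setting the exact inputs carry no error, i.e. $X_0 = 0$ (for a general adapted $X$ one would recenter first).

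For uniqueness, suppose $X_n = M_n' + A_n'$ is a second decomposition of the same type, with $M'$ a martingale and $A'$ predictable with $A_0' = 0$. Then $D_k := A_k - A_k' = M_k' - M_k$ is simultaneously predictable (a difference of predictable processes) and a martingale (a difference of martingales), with $D_0 = 0$; but a predictable martingale satisfies $D_k = E[D_k/\mathbb{F}_{k-1}] = D_{k-1}$ almost surely, so by induction from $D_0 = 0$ we get $D_k = 0$ a.s.\ for every $k$, hence $M = M'$ and $A = A'$ almost surely.

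I do not expect a genuine obstacle here: in this finite, discrete-time setting the statement reduces to bookkeeping of measurability and integrability, and the only point deserving a word of care is the normalization $E(M_n) = 0$, which rests on $X_0 = 0$. The subtlety worth flagging for the reader is that the analogous continuous-time Doob--Meyer theorem is considerably harder --- it requires a class-D (uniform integrability) hypothesis on the submartingale and a delicate limiting argument --- but none of that machinery is needed for the finite horizon used throughout this paper.
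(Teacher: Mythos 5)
Your construction is the standard discrete-time Doob decomposition, and it is correct; the paper itself offers no proof of this theorem, citing it to Doob's monograph, so there is no internal argument to compare against. Your verification of predictability, integrability (via conditional Jensen and the triangle inequality), the martingale property of $M_k = X_k - A_k$, and the uniqueness argument (a predictable martingale started at $0$ is identically $0$ a.s.) are all sound, and you are right that the finite-horizon discrete case avoids the class-D machinery of the continuous-time theorem. The one point you correctly flag is also the one place where the theorem as stated is slightly imprecise: the construction gives $E(M_n) = E(M_0) = E(X_0)$, so the claim $E(M_n)=0$ holds only under the implicit normalization $X_0 = 0$ (or $E(X_0)=0$). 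That normalization is satisfied in the paper's application, where $X = \hat{y}-y$ vanishes on exact inputs, but strictly speaking it is an additional hypothesis rather than a consequence of adaptedness and local integrability alone. With that caveat made explicit, your proof is complete.
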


The martingale $M_n$ reflects the information available up to time $n$. It does not exhibit any drift and captures the unbiased random component of the stochastic process $X_n$. While the sequence $A_n$ represents the cumulative effect of the predictable part of the stochastic process $X_n$. It can be interpreted as the drift of $X_n$. Its predictability means that, at each step, one knows the value of the drift at the next step. For instance, at a step when the algorithm squares a value with error, a term square of the current error will be added to the drift, while the martingale remains centered on 0.

We propose to use Doob-Meyer decomposition to analyze the error under SR-nearness.
We consider an algorithm executed under SR-nearness. Its error is a stochastic process $X = \hat{y} - y$.
Because each random error $\delta_i$ is bounded $\abs{\delta_i}\leq u$, the resulting stochastic process $X$  must also be bounded and is locally integrable. Therefore we can apply Doob-Meyer decomposition and write the error as the sum of a martingale and a drift:
\begin{equation}
	\label{eq:general}
	\hat{y} - y = M + A,
\end{equation}

The martingale component in Equation~\eqref{eq:general} captures the unbiased stochastic behavior; in other words, the errors that can be compensated with SR, while the bias is the expected last term of the drift. 

\paragraph{\textbf{Multi-linear error, $A = 0$}}
In this paper, we study algorithms whose computation graphs are martingale-inducing DAGs. For these algorithms, the error terms are always of degree one, which is why we described them as having multi-linear errors. In fact, in a martingale-inducing DAG, the product of two nodes is allowed only if they have different errors, preventing any increase in the degree of the errors. Furthermore, the addition operation does not increase the degree of errors, even if the two operands share some errors.
%\begin{mehdi}
 %   Ajouter un paragraph qui fait la preuve que pour les DAG étudiés l'erreur reste multi-linéaire.
%\end{mehdi}

In the case of multi-linear errors algorithms, as shown in Section~\ref{sec:sum-product-trees}, the forward error is always captured by a martingale, therefore for multi-linear error the drift component $A$ is zero.

\paragraph{\textbf{Non-linear error}}

For non-linear error algorithms, we cannot apply the method from Section~\ref{sec:sum-product-trees}.  Nevertheless, Doob-Meyer decomposition still applies and provides a simplifying framework for analyzing the error.
Indeed, the martingale term can be studied with the Azuma-Hoeffding and has a probabilistic bound in $\mathcal{O}(\sqrt{n}u)$. The problem is, therefore, reduced to the study of the drift term $A$.

In~\cite{el2023bounds}, we have studied variance computation algorithms. By deterministically bounding the drift term $A$, we showed that it was negligible at the first order over $u$ and proved an error bound in $\mathcal{O}(\sqrt{n}u)$. El Arar et al.~\cite[thm 3]{arar2024probabilistic} have implicitly used this decomposition to study the effect of the number of random bits required to implement SR effectively. We conjecture that the drift is negligible when $nu^2=o(1)$. However, for low precision computations, the drift may have a dominant effect on the precision of the result. 

Doob-Meyer provides an interesting decomposition for analyzing non-linear algorithms. Nevertheless, in general, it is not easy to build the decomposition, and bounding the error of general non-linear algorithms under SR-nearness remains an open problem.

\section{Conclusion}
\label{sec:conclusion}

The worst-case error bound for a computation involving $n$ elementary operations is $\mathcal{O}(nu)$.  This bound, while useful, can be overly pessimistic as it assumes a deterministic accumulation of errors and does not account for error compensation phenomena. With SR, the use of probabilistic tools, including martingales, variance analysis, and concentration inequalities, allows us to better investigate rounding errors behavior, establish probabilistic error bounds in $\mathcal{O}(\sqrt{n}u)$.%, and also avoid stagnation problems that appear in several applications. 

In this paper, we propose a general methodology to build a martingale for any computation DAG with multi-linear errors arising from addition, subtraction, and multiplication operations. We applied this methodology to pairwise summation and Horner algorithms, confirming results consistent with earlier works on these algorithms under SR. Moreover, to the best of our knowledge, we are the first to analyze Karatsuba polynomial multiplication under SR. Using our approach, we established a probabilistic error bound in $\mathcal{O}(\sqrt{n}u)$ for Karatsuba’s algorithm as well. We have also discussed how to analyze the error of a general algorithm using the Doob-Meyer decomposition that separates the martingale term and the drift part. We believe that this probabilistic framework can serve as an effective tool to improve the rounding error analysis under SR in numerical algorithms.

The scripts to reproduce the numerical experiments of Section~\ref{sub-sec:karatsuba} are made available at \url{https://github.com/verificarlo/sr-karatsuba}.

%\section*{Acknowledgments}

\bibliographystyle{siamplain}
\bibliography{references}

\vspace{1cm}
\appendix
\section{Addition}
\label{sec:ap-addition}

\begin{lemma}
	Let $\mathcal{K}_z=\frac{\abs{x}}{\abs{x+y}}  \mathcal{K}_x+\frac{\abs{y}}{\abs{x+y}} \mathcal{K}_y$. The martingale $(\Psi_i)_{i=0}^{m-1}$ satisfies
		$$ \abs{\Psi_{i} - \Psi_{i-1}} \leq u C_i,
		$$
		where $C_i=\mathcal{K}_z (1+u)^{i-1}$ for all $1\leq i\leq m-1$.
\end{lemma}

\begin{proof}
	For all $0\leq i<m-1$ by definition of $\Psi_i$, we have 
	$$\Psi_{i} - \Psi_{i-1} = \frac{x}{x+y} (\Phi_{\min\{i, k-1\}} - \Phi_{\min\{i-1, k-1\}})+\frac{y}{x+y}(\mathrm{X}_{\min\{i, l-1\}} -\mathrm{X}_{\min\{i-1, l-1\}}).$$
	Then, by induction hypothesis and \mido{\eqref{eq:induction-sum}} we get
	\begin{align*}
		\abs{\Psi_{i} - \Psi_{i-1}} \leq& \abs{\frac{x}{x+y}} \abs{\Phi_{\min\{i, k-1\}} - \Phi_{\min\{i-1, k-1\}}}\\
		&+ \abs{\frac{y}{x+y}} \abs{\mathrm{X}_{\min\{i, l-1\}} -\mathrm{X}_{\min\{i-1, l-1\}} }\\
		\leq& \frac{\abs{x}}{\abs{x+y}} \mathcal{K}_x u(1+u)^{\min\{i-1, k-1\}}  + \frac{\abs{y}}{\abs{x+y}} \mathcal{K}_y u(1+u)^{\min\{i-1, l-1\}} \\
		\leq& \mido{\frac{\abs{x}}{\abs{x+y}} \mathcal{K}_x u(1+u)^{i-1}  + \frac{\abs{y}}{\abs{x+y}} \mathcal{K}_y u(1+u)^{i-1}} \\
		=&  \mido{\mathcal{K}_z u (1+u)^{i-1}.}
	\end{align*}
	Moreover, for $i= m-1$, \mido{ $\Psi_{m-1} = \left(1+ \frac{x}{x+y}\Phi+\frac{y}{x+y}\mathrm{X}\right)(1+\delta) -1$, and $\Psi_{m-2} = \frac{x}{x+y}\Phi+\frac{y}{x+y}\mathrm{X}$. We thus have}
	\begin{align*}
			\abs{\Psi_{m-1} - \Psi_{m-2}} &= \abs{\left(1+ \frac{x}{x+y}\Phi+\frac{y}{x+y}\mathrm{X}\right)\delta}\\
			&= \mido{\abs{\left(\frac{x}{x+y}(\Phi+1)+\frac{y}{x+y}(\mathrm{X}+1)\right)\delta}}\\
			&\leq u \abs{\frac{x}{x+y}(\Phi+1)+\frac{y}{x+y}(\mathrm{X}+1)}.
	\end{align*}
	Since $\Phi_{0}=0$,
	\begin{align*}
		\abs{\Phi+1} &= \abs{\Phi_{k-1}+1} = \abs{1+ \sum_{j=1}^{k-1} (\Phi_{j} - \Phi_{j-1})}\\
		&\leq 1+ \sum_{j=1}^{k-1} \abs{ \Phi_{j} - \Phi_{j-1}}\\
		&\leq 1+ \sum_{j=1}^{k-1} u \mathcal{K}_x (1+u)^{j-1}  &\mido{\text{from \eqref{eq:induction-sum}}}\\
		&= 1+ u\mathcal{K}_x \frac{(1+u)^{k-1} -1}{u}\\
		&= 1+ \mathcal{K}_x (1+u)^{k-1} - \mathcal{K}_x \\
		&\leq \mathcal{K}_x (1+u)^{k-1}  &\mido{\text{since $\mathcal{K}_x \ge 1$}}.
	\end{align*}
	The same method shows that $ \abs{X+1}  \leq \mathcal{K}_y (1+u)^{l-1}$. \mido{Since $m = \max\{k,l\} + 1$,} it follows that
	\begin{align*}
		\abs{\Psi_{m-1} - \Psi_{m-2}}
		&\leq u \left( \frac{\abs{x}}{\abs{x+y}}\mathcal{K}_x (1+u)^{k-1} + \frac{\abs{y}}{\abs{x+y}}\mathcal{K}_y (1+u)^{l-1} \right)\\
		&\leq \mathcal{K}_z u (1+u)^{m-2}.
	\end{align*}
\end{proof}

\begin{theorem}
	For all $0 < \lambda <1$, the computed $\hat{z}$ in Equation~\eqref{eq=z+} satisfies under SR-nearness
	\begin{equation}
		\frac{\abs{\hat z -z}}{\abs{z}} \leq \mathcal{K}_z \sqrt{u\gamma_{2(m-1)}(u)} \sqrt{\ln(2/ \lambda)} = \mathcal{O}(u \sqrt{m}),
	\end{equation}
with probability at least $1-\lambda$, where $\gamma_m(u) = (1+u)^m -1 = \mathcal{O}(mu)$ 
\end{theorem}

\begin{proof}
	Using the Azuma-Hoeffding inequality, we have
	$$ \frac{\abs{\hat z -z}}{\abs{z}}= \abs{\Psi_{m-1}} \leq  \sqrt{\sum_{i=1}^{m-1} u^2 C_i^2} \sqrt{2\ln(2/ \lambda)},
	$$
	with probability at least $1-\lambda$. Moreover,
	\begin{align*}
		\sum_{i=1}^{m-1} u^2 C_i^2 &= u^2  \mathcal{K}_z^2 \sum_{i=1}^{m-1} (1+u)^{2(i-1)}\\
		&= u^2  \mathcal{K}_z^2  \frac{(1+u)^{2(m-1)} -1}{u^2 +2u}\\
		&\leq u  \mathcal{K}_z^2  \frac{\gamma_{2(m-1)}(u)}{2}.
	\end{align*}
Finally, we get
\begin{align*}
	\frac{\abs{\hat z -z}}{\abs{z}} \leq \mathcal{K}_z \sqrt{u\gamma_{2(m-1)}(u)} \sqrt{\ln(2/ \lambda)},
\end{align*}
with probability at least $1-\lambda$.
\end{proof}

\section{Multiplication}
\label{sec:ap-multiplication}

\begin{lemma}
	Let $\mathcal{K}_z= \mathcal{K}_x \mathcal{K}_y$. The martingale $(\Psi_i)_{i=0}^{m-1}$ satisfies
	$$ \abs{\Psi_{i} - \Psi_{i-1}} \leq u C_i,
	$$
	where $C_i=\mathcal{K}_z (1+u)^{i-1}$ for all $1\leq i\leq m-1$.
\end{lemma}

\begin{proof}
	For all $1\leq i \leq k-1$, $\abs{\Psi_{i} - \Psi_{i-1}}= \abs{\Phi_{i} - \Phi_{i-1}} \leq \mathcal{K}_x u (1+u)^{i-1}$. Moreover, for all $k \leq i \leq m-2$,
	\begin{align*}
		\abs{\Psi_{i} - \Psi_{i-1}} &= \abs{(1+\Phi)(1+\mathrm{X}_i) - (1+\Phi)(1+\mathrm{X}_{i-1})}\\
		&= \abs{(1+\Phi)(\mathrm{X}_i - \mathrm{X}_{i-1})}\\
		&\leq   \abs{1+\Phi} \mathcal{K}_y u (1+u)^{i-k}.
	\end{align*}
	As for the summation case, $\abs{1+\Phi} \leq \mathcal{K}_x (1+u)^{k-1}$. Then, for all $k \leq i \leq m-2$,
	$$ 	\abs{\Psi_{i} - \Psi_{i-1}} \leq \mathcal{K}_x (1+u)^{k-1}\mathcal{K}_y u (1+u)^{i-k} = \mathcal{K}_z u (1+u)^{i-1}.
	$$
	Finally, \mido{for $i=m-1$,} we obtain
	\begin{align*}
		\abs{\Psi_{m-1} - \Psi_{m-2}} &= \abs{(1+\Phi)(1+\mathrm{X})\delta}\\
		&\leq u \mathcal{K}_x (1+u)^{k-1} \mathcal{K}_y (1+u)^{m-k-1}\\
		&\leq  u \mathcal{K}_z (1+u)^{m-2}. 
	\end{align*}
\end{proof}

\begin{theorem}
	For all $0 < \lambda <1$, the computed $\hat{z}$ in Equation~\eqref{eq:z*} satisfies under SR-nearness
	\begin{equation}
		\frac{\abs{\hat z -z}}{\abs{z}} \leq \mathcal{K}_z \sqrt{u\gamma_{2(m-1)}(u)} \sqrt{\ln(2/ \lambda)} \mido{= \mathcal{O}(u \sqrt{m})},
	\end{equation}
	with probability at least $1-\lambda$.
\end{theorem}

\begin{proof}
	Using the Azuma-Hoeffding inequality, we have
	$$ \frac{\abs{\hat z -z}}{\abs{z}}= \abs{\Psi_{m-1}} \leq  \sqrt{\sum_{i=1}^{m-1} u^2 C_i^2} \sqrt{2\ln(2/ \lambda)},
	$$
	with probability at least $1-\lambda$. Moreover,
	\begin{align*}
		\sum_{i=1}^{m-1} u^2 C_i^2 &= u^2  \mathcal{K}_z^2 \sum_{i=1}^{m-1} (1+u)^{2(i-1)}\\
		&= u^2  \mathcal{K}_z^2  \frac{(1+u)^{2(m-1)} -1}{u^2 +2u}\\
		&\leq u  \mathcal{K}_z^2  \frac{\gamma_{2(m-1)}(u)}{2}.
	\end{align*}
	Finally, we get
	\begin{align*}
		\frac{\abs{\hat z -z}}{\abs{z}} \leq \mathcal{K}_z \sqrt{u\gamma_{2(m-1)}(u)} \sqrt{\ln(2/ \lambda)},
	\end{align*}
	with probability at least $1-\lambda$.
\end{proof}

\section{Karatsuba polynomial multiplication}
\label{sec:ap-karatsuba}

\begin{theorem}
  For $i \in \llbracket 0, d \rrbracket$, the length of the error martingale in the computation of coefficient $r_i$ is
  \begin{equation*}
    m(i, d) = 1 + 3 \lfloor\log_2 \min\{i+1, d-i+1\}\rfloor + m_A + m_B.
  \end{equation*}
\end{theorem}
\begin{proof}
  By induction on $n$.

  For $n=0$, $d=0$ and $R = r_0 = a_0.b_0$. Because the DAG is composed of a single multiplication, the length of the error martingale is $1 + m_A + m_B$.
  We verify that this is the value of $m(0,0)$.

  For $n \ge 1$, $d=2^{n+1}-2$, we consider $R = \mido{K(A,B)}$ where $A$ and $B$ are of degree $2^n-1$.
  Let us first compute the martingale lengths for the partial products $P_0$, $P_1$ and $P_2$:

  \begin{itemize}
    \item $P_0 = K(A_l, B_l)$, where $A_l$ and $B_l$ are of degree $2^{n-1}-1$. By induction hypothesis, the length of the error martingale is $m_{P_0}(i) = m(i, 2^{n}-2)$.
    \item $P_2 = K(A_h, B_h)$, similarly the length of the error martingale is $m_{P_2}(i) = m(i, 2^{n}-2)$.
    \item $P_1 = K(A_h - A_l, B_l - B_h)$.
          Each coefficient in $A_h-A_l$ is computed by subtracting two coefficients from $A_h$ and $A_l$, therefore its maximum martingale length is $1+\max\{m_A, m_A\} = 1 + m_A$.
          The same reasoning applies to $B_h - B_l$, which has a maximum martingale length $1+m_B$.
          By induction hypothesis, the length of the error martingale of $P_1$ is $m_{P_1}(i) = m(i, 2^{n}-2) + 2$.
          As shown, the additional $2$ comes from the inner subtractions.
  \end{itemize}

  Let us now consider $R = P_2.X^{2^{n}} + (P_0 + P_1 + P_2).X^{2^{n-1}} + P_0$.
  Figure~\ref{fig:shifts} represents the shifted partial products in the
  computation of $R$ : $P_0$ is not shifted, $P_2$ is shifted $2^{n}$
  positions left. $P_1$, $P_2$ and $P_0$ are shifted $2^{n-1}$ positions left.

  \begin{figure}
  \hspace{-.25cm}
  \scalebox{0.96}{
    \begin{tikzpicture}[
        start chain = going right,
        node distance = 0pt,
        tops/.style={draw, minimum width=1.5cm, minimum height=1cm,
            outer sep=0pt, on chain},
        bots1/.style={draw, minimum height=1cm, minimum width=6cm, outer sep=0pt},
        bdash/.style={draw, dashed, minimum height=1cm, minimum width=1.5cm, outer sep=0pt},
      ]
      \node [tops] (n1)  at (0,0) {\small$d$};
      \node [tops,minimum width=2.25cm] (n2) {$\cdots$};
      \node [tops] (n3) {\small$d-2^{n-1}$};
      \node [tops,minimum width=.75cm] (n4) {$\cdots$};
      \node [tops] (n5) {\small$d/2$};
      \node [tops,minimum width=.75cm] (n6) {$\cdots$};
      \node [tops] (n7) {\small$2^{n-1}$};
      \node [tops,minimum width=2.25cm] (n8) {$\cdots$};
      \node [tops] (n9) {\small$0$};

      \node   [bots1]  at (2.25cm,-1.5 cm) {};
      \node   [bdash] at (2.25cm, -1.5cm) {$P_2$};
      \node   [bots1]  at (9.75cm,-1.5 cm) {};
      \node   [bdash] at (9.75cm, -1.5cm) {$P_0$};
      \node   [bots1]  at (6cm,-3 cm) {};
      \node   [bots1]  at (6cm,-4.5 cm) {};
      \node   [bots1]  at (6cm,-6 cm) {};
      \node   [bdash]  at (6cm,-3 cm) {$P_0$};
      \node   [bdash]  at (6cm,-4.5 cm) {$P_1$};
      \node   [bdash]  at (6cm,-6 cm) {$P_2$};

      \draw [decorate,decoration={brace,amplitude=10pt}]
      (-.75cm, 1cm) -- (3cm,1cm) node[black,midway,above=8pt] {Case (c)};
      \draw [decorate,decoration={brace,amplitude=10pt}]
      (9cm, 1cm) -- (12.75cm,1cm) node[black,midway,above=8pt] {Case (a)};
      \draw [decorate,decoration={brace,amplitude=10pt}]
      (3cm, 1cm) -- (9cm,1cm) node[black,midway,above=8pt] {Case (b)};
    \end{tikzpicture}
    }
    \caption{Shifted partial products in $R = K(A,B)$ with $d=2^{n+1}-2$. The dashed cells represent the central coefficient in each polynomial. \label{fig:shifts}}
  \end{figure}

  Let us treat separately cases (a), (b), and (c).

  \paragraph{Case (a):} For $i \in \llbracket 0, 2^{n-1}-1 \rrbracket$, $r_i$ corresponds to the $i$-th coefficient in $P_0$.
  Therefore the martingale length for $r_i$ is given by $m_{P_0}(i)$ and
  \begin{equation*}
    m_{P_0}(i) = m(i, 2^{n}-2) = 1 + 3 \lfloor\log_2 (i+1)\rfloor + m_A + m_B = m(i, d)
  \end{equation*}

  \paragraph{Case (c):} For $i \in \llbracket d-2^{n-1}+1, d \rrbracket$, $r_i$ corresponds to the $i-2^n$ coefficient in $P_2$.
  Therefore the martingale length for $r_i$ is given by $m_{P_2}(i-2^n)$ and
  \begin{align*}
    m_{P_2}(i-2^n) = m(i-2^n, 2^{n}-2) & = 1 + 3 \lfloor\log_2 \left((2^n-2) - (i-2^n) + 1\right)\rfloor + m_A + m_B \\
                                   & = 1 + 3 \lfloor\log_2 \left((2^{n+1}-2) -i + 1\right)\rfloor + m_A + m_B    \\
                                   & = m(i, d)
  \end{align*}
  \paragraph{Case (b):} This case is the most interesting one, because each
  coefficient $r_i$ results of the sum of at most four coefficients from $P_0$, $P_2$,
  $P_0.X^{2^{n-1}}$, $P_1.X^{2^{n-1}}$, and $P_2.X^{2^{n-1}}$.

  To minimize the martingale length, we will use the summing order from the DAG in Figure~\ref{fig:sumd}.
  Note that even if the figure depicts a tree, it corresponds to a DAG since some leaves share coefficients (e.g. $P_2$ and $P_2.X^{2^{n-1}}$).
  Let us note $m_b(i)$ the martingale length of the DAG.

  \begin{figure}
    \centering
    \begin{tikzpicture}[level distance=1.5cm,
        level 1/.style={sibling distance=3cm},
        level 2/.style={sibling distance=4cm},
        level 3/.style={sibling distance=2cm},
        edge from parent/.style={<-,draw},
        >=latex
      ]
      \node {$+$}
      child {node {$P_1.X^{2^{n-1}}$}}
      child {node {$+$}
      child {node {$+$}
      child {node {$P_2.X^{2^{n}}$}}
      child {node {$P_2.X^{2^{n-1}}$}}
      }
      child {node {$+$}
      child {node {$P_0.X^{2^{n-1}}$}}
      child {node {$P_0$}}
      }
      };
    \end{tikzpicture}
    \caption{Summing DAG for case (b) \label{fig:sumd}.}
  \end{figure}

  For each node $z \leftarrow x + y$ in the tree, $m_z = 1+ \max\{m_x,m_y\}$.
  Therefore, the martingale length for $r_i$ is given by
  \begin{align*}
    m_b(i) & = 1 + \max\Big\{  m_{P_1}(i-2^{n-1}), 1 + \max\{                             \\
           & \quad\quad\quad 1 + \max\{m_{P_2}(i-2^{n}), m_{P_2}(i-2^{n-1})\},        \\
           & \quad\quad\quad 1 + \max\{m_{P_0}(i-2^{n-1}), m_{P_0}(i)\}\}\Big\}             \\
           & = 1 + \max\Big\{  2 + m(i-2^{n-1}, 2^n-2), 1 + \max\{                        \\
           & \quad\quad\quad 1 + \max\{m(i-2^{n}, 2^{n}-2), m(i-2^{n-1}, 2^{n}-2)\},  \\
           & \quad\quad\quad 1 + \max\{m(i-2^{n-1}, 2^{n}-2), m(i, 2^{n}-2)\}\}\Big\}       \\
           & = 1 + \max\Big\{  2 + m(i-2^{n-1}, 2^n-2), \\
           & \quad\quad\quad 2 + \max\{m(i-2^{n}, 2^{n}-2), m(i-2^{n-1}, 2^{n}-2), m(i, 2^{n}-2)\}\Big\}  \\
           & = 3 + \max\Big\{ m(i-2^{n}, 2^{n}-2), m(i-2^{n-1}, 2^{n}-2), m(i, 2^{n}-2)\Big\}
  \end{align*}
  We note that to achieve a minimal martingale length, it is important to have
  $P_1$, which has an additional martingale length of 2, as a direct child of the root node.

  Now let us compute $m_b(i)$:
  \begin{itemize}
    \item For $i = d/2 = 2^{n}-1$, the maximum is reached for $P_0$, $P_1$, and $P_2$ due to \emph{Property 2},
          \begin{align*}
            m_b(i) & =  3 + m(2^{n}-1-2^{n-1}, 2^{n}-2) = 3 + m(2^{n-1}-1, 2^{n}-2)      \\
                   & =         3 + 3(n-1) + 1 + m_A + m_B = 1 + 3n + m_A + m_B = m(i, d)
          \end{align*}
    \item For $i \in \llbracket 2^{n-1}, d/2 \llbracket$, we subdivide the interval in two.
          \begin{itemize}
            \item For $i \in \llbracket 2^{n-1}, 2^{n-1}+2^{n-2} \llbracket$, we are close to the central element of $P_0$, therefore due to \emph{Property 3}
                  \begin{align*}
                    m_b(i) & =  3 + 3(n-2) + 1 + m_A + m_B = 3(n-1) + 1 + m_A + m_b     \\
                           & =  m(i, d)                 \end{align*}
            \item For $i \in \llbracket 2^{n-1}+2^{n-2}, d/2 \llbracket$, we are close to the central element of $P_1$ (or $-P_2, -P_0$), therefore due to \emph{Property 3} we conclude as before.
          \end{itemize}
    \item For $i \in \rrbracket d/2, d-2^{n-1} \rrbracket$, we apply a similar proof scheme by subdividing the interval and applying \emph{Property 3}.
  \end{itemize}
\end{proof}

\begin{remark}{Computation of $\mathcal{K}_{d/2}$ for polynomials with constant positive coefficients.} For $n \geq 0$, we define

$$
C(A, B) = \mathcal{K}_{d/2} \cdot |r_{d/2}|,
$$

where $\mathcal{K}_{d/2}$ is the condition number of the central coefficient of the product $A \times B$, and $r_{d/2}$ denotes the central coefficient of the product.

Assume that both \(A\) and \(B\) are polynomials whose coefficients are all equal to a constant \(a > 0\). 
We claim that $C_n(a) = a^2 \cdot 6^n.$

\begin{itemize}
\item
  \textbf{Base case ($n = 0$, $d = 0$)}:
  $$
  C_0 = C(A, B) = |a_0 \cdot b_0| = a^2 \cdot 6^0.
  $$
\item
  \textbf{Recursive case ($n \geq 1$)}:
  $$
  C_n = C(A, B) = |C_{P_0}| + |C_{P_2}| + |C_{P_1}|,
  $$

  where
  $$ 
  C_{P_0} = C(A_l, B_l) = C_{n-1}(a), \quad C_{P_2} = C(A_h, B_h) = C_{n-1}(a),
  $$

  since $A_l, B_l, A_h, B_h$ all have coefficients equal to $a$. Furthermore,

  $$C_{P_1} = C(A_l + A_h, B_l + B_h) = C_{n-1}(2a),$$

  as both $A_l + A_h$ and $B_l + B_h$ are polynomials with
  constant coefficients equal to $2a$. Note that since we are computing the numerator of the condition number, subtractions have been replaced by additions in the cross product.  Hence, we obtain the recurrence:
  $$
  C_n(a) = 2 \cdot C_{n-1}(a) + C_{n-1}(2a),
  $$
  
  which evaluates to $ C_n(a) = 2a^2 \cdot 6^{n-1} + 4a^2 \cdot 6^{n-1} = a^2 \cdot 6^n.$
\end{itemize}

Moreover, by induction $r_{d/2} = a^2 \cdot 2^n$.
It follows that $\mathcal{K}$ grows with $d$:
$$
\mathcal{K}_{d/2} = \frac{C_n(a)}{|r_{d/2}|} = \frac{a^2 \cdot 6^n}{a^2 \cdot 2^n} = 3^n = (d + 2)^{\log_2 3}.
$$

\end{remark}
\end{document}